\def\BibTeX{{\rm B\kern-.05em{\sc i\kern-.025em b}\kern-.08em
    T\kern-.1667em\lower.7ex\hbox{E}\kern-.125emX}}
\newif\ifDEBUG
\setlist[itemize]{leftmargin=*,noitemsep,topsep=0pt}
\setlist[enumerate]{leftmargin=*}
\newcommand{\CSharp}{\mbox{C\#}\xspace} 
\newcommand{\REDOS}{ReDoS\xspace}
\newcommand{\ie}{\textit{i.e.,}\xspace}
\newcommand{\eg}{\textit{e.g.,}\xspace}
\newcommand{\etal}{\textit{et al.}\xspace}
\newcommand{\etals}{\textit{et al.'s}\xspace}
\newtheorem{thm}{Theorem}\setcounter{thm}{-1}
\newenvironment{framedThm}
  {\begin{mdframed}\begin{thm}} {\end{thm}\end{mdframed}}
\DeclareMathSymbol{\mlq}{\mathord}{operators}{``}
\DeclareMathSymbol{\mrq}{\mathord}{operators}{`'}
\newif\ifSAVESPACE
    \newcommand{\mysection}[1]{\vspace{1.5mm}\noindent\textbf{#1:}}
    \newcommand{\mysubsection}[1]{\vspace{1.5mm}\noindent\textbf{#1.}}
    \renewcommand{\mysubsection}[1]{\hspace{-0.15cm}\emph{#1.}}
    \newcommand{\myparagraph}[1]{\paragraph{#1}}
    \renewcommand{\myparagraph}[1]{\vspace{0.25em} \noindent \textbf{#1:}}
    \newcommand{\myitparagraph}[1]{\vspace{0.25em} \emph{#1:}}
    \renewcommand{\myparagraph}[1]{\mysection{#1}}
    \renewcommand{\myitparagraph}[1]{\mysubsection{#1}}
    \newcommand{\mysection}[1]{\subsection{#1}}
    \newcommand{\mysubsection}[1]{\subsubsection{#1}}
    \newcommand{\AH}[1]{\todo[color=cyan,inline]{AH:#1}}
    \newcommand{\JD}[1]{\todo[color=yellow,inline]{JD:#1}}
    \newcommand{\FS}[1]{\todo[color=green,inline]{FS:#1}}
    \newcommand{\DY}[1]{\todo[color=orange,inline]{DY:#1}}
    \newcommand{\ZA}[1]{\todo[color=pink,inline]{ZA:#1}}
    \newcommand{\AH}[1]{}
    \newcommand{\JD}[1]{}
    \newcommand{\FS}[1]{}
    \newcommand{\DY}[1]{}
    \newcommand{\ZA}[1]{}
\newcommand{\HIDDEN}[1]{\todo[disable]{#1}}
\newcommand{\stacktwo}[5]{%
  \mathord{%
    \hbox{%
      \raisebox{#2}{\rlap{#1}}%
      \raisebox{#4}{\rlap{#3}}%
      \hphantom{#5}%
    }
  }
}
\newcommand{\xblnot}{\stacktwo{$\vee$}{.18ex}{$\cap$}{-.22ex}{$-$}}
\crefname{figure}{Figure}{Figures}
\crefname{table}{Table}{Tables}
\crefname{theorem}{Theorem}{Theorems}
\crefname{thm}{Theorem}{Theorems}
\crefname{lemma}{Lemma}{Lemmata}
\crefname{equation}{Eqt.}{Eqts.}
\crefname{appendix}{Appendix}{Appendices}
\newtheorem{lemma}{Lemma}
\newcommand{\Likert}[1]{\emph{``{#1}''}}
\newcommand{\myinlinequote}[1]{``\emph{#1}''}
\newcommand{\CorpusNRegexes}{537,806\xspace}
\newcommand{\AnalyzedNRegexes}{209,188\xspace}
\newcommand{\HumanSubjectsNComposition}{20\xspace}
\newcommand{\HumanSubjectsNwithTools}{9\xspace}
\begin{document}

\newcommand{\MyTitle}[1]{}
\renewcommand{\MyTitle}{Theory and Patterns for Avoiding Regex Denial of Service}
\renewcommand{\MyTitle}{Theoretically Grounded Anti-Patterns and Fix Strategies for Regex Denial of Service}
\renewcommand{\MyTitle}{Improving Developers' Understanding of Regex Denial of Service Tools through Anti-Patterns and Fix Strategies}

\title{\MyTitle}


\author{\IEEEauthorblockN{Sk Adnan Hassan\IEEEauthorrefmark{1}  \thanks{\IEEEauthorrefmark{1}Sk Adnan Hassan is currently employed at Walmart Inc.}}
\IEEEauthorblockA{\emph{Virginia Tech} \\
Blacksburg, VA, USA\\
skadnan@vt.edu}
\and
\IEEEauthorblockN{Zainab Aamir, Dongyoon Lee}
\IEEEauthorblockA{\emph{Stony Brook University} \\
Stony Brook, NY, USA\\
\{zaamir, dongyoon\}@cs.stonybrook.edu}
\and
\IEEEauthorblockN{James C. Davis}
\IEEEauthorblockA{\emph{Purdue University} \\
West Lafayette, IN, USA\\
davisjam@purdue.edu}
\and
\IEEEauthorblockN{Francisco Servant\IEEEauthorrefmark{2}  \thanks{\IEEEauthorrefmark{2}Some work performed while at Virginia Tech, U.S.A., and Universidad Rey Juan Carlos, Madrid, Spain.}}
\IEEEauthorblockA{\emph{University of M\'alaga} \\
M\'alaga, Spain \\
fservant@uma.es}
}

\maketitle
\thispagestyle{plain}



\begin{abstract}
Regular expressions are 
used for diverse purposes, including input validation and firewalls.
Unfortunately, they can also lead to a security vulnerability called \REDOS(Regular Expression Denial of Service), caused by a super-linear worst-case execution time during regex matching.
Due to the severity and prevalence of \REDOS, past work proposed automatic tools to detect and fix regexes.
Although these tools were evaluated in automatic experiments, their usability has not yet been studied; usability has not been a focus of prior work.
Our insight is that the usability of existing tools to detect and fix regexes will improve if we complement them with anti-patterns and fix strategies of vulnerable regexes.

We developed novel anti-patterns for vulnerable regexes, and a collection of fix strategies to fix them. 
We derived our anti-patterns and fix strategies from a novel theory of regex infinite ambiguity --- a necessary condition for regexes vulnerable to \REDOS.
We proved the soundness and completeness of our theory.
We evaluated the effectiveness of our anti-patterns, both in an automatic experiment and when applied manually.
Then, we evaluated how much our anti-patterns and fix strategies improve developers' understanding of the outcome of detection and fixing tools.
Our evaluation found that our anti-patterns were effective over a large dataset of regexes (N=\AnalyzedNRegexes): 100\% precision and 99\% recall, improving the state of the art 50\% precision and 87\% recall.
Our anti-patterns were also more effective than the state of the art when applied manually (N=\HumanSubjectsNComposition): 100\% developers applied them effectively vs. 50\% for the state of the art.
Finally, our anti-patterns and fix strategies increased developers' understanding using automatic tools (N=\HumanSubjectsNwithTools): from median \Likert{Very weakly} to median \Likert{Strongly} when detecting vulnerabilities, and from median \Likert{Very weakly} to median \Likert{Very strongly} when fixing them.

\end{abstract}

\begin{IEEEkeywords}
Regular expression denial of service, Usability 
\end{IEEEkeywords}

\HIDDEN
{

- New argumentation

Redos is a security problem that affects thousands of regexes in software, and that has created high losses for many websites, e.g., cloudflare, etc.
Given the importance of this problem, multiple approaches have been proposed to identify and fix ReDos.
However, little attention has been put into avoiding writing redos-vulnerable regexes in the first place.

points to make
	swiss cheese approach
		the more tools in our toolbox, the better
	not get infected >> testing
	existing approaches for detection
		in NFA
		in regex
			but these are not as accurate
			it may be easier to think in terms of the regex than the NFA (or at least complementary)
			NFA would require visualization, with more context switching
				humans can only keep so many things in their head at the same time
			this modality is also applicable to code review as-is (without context switching to a viz)
	regex vs. NFA similar to code vs. AST (we don't visualize the AST)
	different audiences. some people prefer the NFA, but others will prefer the regex
		can we run a small experiment for this?
	
	why not auto-detect in the background?

	why not auto-fix in the background?
		auto-fixes still require developer inspection
		these techniques are not 100\% complete

	why not auto-X in the background?
		Some people do not use tools, even if they are available
			many real-world bugs could have been found by tools, but they are not
		For that reason, we still teach developers how to do things by themselves
			example: SQL injection, and other best practices
		We have best practices that we teach for everything, even if we have tools to double-check, eg in code review

		swiss cheese approach
			some people will use tools, others will learn best practices, others will do neither
				I mean, many people still prefer vim!
				our anti-patterns require less context switching than anything based on the NFA
			that's why we need multiple tools in our toolbox
		different audiences. some people prefer the NFA, but others will prefer the regex
			can we run a small experiment for this?
		(maybe) anti-patterns can be used to improve detection tools, allowing them to provide explanations in the regex itself

	useful references
		naseem17plop - presents a taxonomy of security anti-patterns. We will use it to motivate that we are not the only ones that value security anti-patterns
		islam20secdev - set of security anti-patterns for spring framework

}

\HIDDEN
{
- Paper outline:

Infinite Ambiguity in Regular Expressions
	Introduction (1.5 pages)
		problem
		impact of the problem
		existing solutions so far
		gap in existing work
		our proposed approach(es)
		benefits of our proposed approaches
		our evaluation
		our results
		impact of our results
	Background (0.2 pages)
		ReDOS
		Super-linearity
			Existing Superlinearity anti-patterns
		Ambiguity
			Existing Ambiguity anti-patterns
	Related Work (0.5 pages)
		Studies of the impact of ReDOS
		Tools to detect ReDOS
		Tools to auto-fix vulnerability to ReDOS
	Proposed Approaches
		Infinite Ambiguity (IA) Theory (This is the _foundational_ contribution)
			Rules
			Proofs
		Infinite Ambiguity Anti-patterns (This is the _practical_ contribution)
			Anti-patterns
			How we obtained them from the theory
	Evaluation Design
		Evaluating the infinite-ambiguity rules
		- RQ-1: How effective are our infinite-ambiguity rules when applied in a large dataset of regexes?
		- RQ-2: How common is each infinite-ambiguity rule in real-world regexes?
		Evaluating the infinite-ambiguity anti-patterns
		- RQ-3: How effective are our infinite-ambiguity anti-patterns when applied in a large dataset of regexes?
		- RQ-4: How common is each infinite-ambiguity anti-pattern in real-world regexes?
		Evaluating the infinite-ambiguity anti-patterns with humans
		- RQ-5: How effective are our infinite-ambiguity anti-patterns when used by humans?
	Evaluation Dataset
		Description
		Curation
			We removed:
				- e-regexes (not supported by our theory)
				- regexes not supported by Weideman (we could not capture ground-truth for them)
	RQ-1
		Research Method
			Process: We apply the theory rules on all regexes to determine their IA status
			Ground truth: Obtained with a modified Weideman tool (it determines IA by analyzing the NFA)
			Metrics: precision, recall, F-1
		Results
			We expect to observe 100\% precision and recall
	RQ-2
		Research Method
			Process: For each regex, we label the theory rule that determined their IA status
			Metrics: Ratio of prevalence for each rule
		Results
			We expect to observe the most intuitive rules being most common, but hope to make some interesting observations, such as:
				- what domains use regexes under the less-common rules?
	RQ-3
		Research Method
			Process: We apply anti-patterns on all regexes to determine their IA status
			Studied techniques: our anti-patterns, and existing work (davis18fse)
			Ground truth: Obtained with a modified Weideman tool (it determines IA by analyzing the NFA)
			Metrics: precision, recall, F-1
		Results
			We expect our anti-patterns to provide 100\% recall, but lower precision
			We expect davis18fse to provide lower recall and precision
			We also expect to make some interesting observations, such as 
				- why are false positives in our anti-patterns happening?
				- what are some cases that our anti-patterns detect and davis18fse does not?
	RQ-4
		Research Method
			Process: For each regex, we label the anti-pattern that determined their IA status
			Studied techniques: our anti-patterns, and existing work (davis18fse)
			Metrics: Ratio of prevalence for each rule
		Results
			We expect to observe the most intuitive rules being most common, but hope to make some interesting observations, such as:
				- what domains use regexes under the less-common rules?
	RQ-5
		Research Method
			Process: experiment with users
			Main question: Are our anti-patterns more precise and complete than the existing ones (davis18fse)?
			Studied techniques: our anti-patterns, and existing work (davis18fse)
			Studied regexes: one regex that exemplifies each one of the cases that benefited from our anti-patterns (as observed in RQ-3), plus one for a case that did not benefit.
			Metrics: precision, recall
		Results	
			We expect that the benefits observed in RQ-3 also happen when humans use the tool in practice (in a lab setting)
			We also expect to capture interesting user impressions
	Discussion (0.5 pages)
		On completeness vs. usability: Theory vs. anti-patterns
			Why didn't we include the theory rules in the user study?
		The swiss cheese approach: Why not just use automated detection tools?
		On regex modalities: Why not just use EDA, IDA as anti-patterns?
	Threats to Validity (0.3 pages)
	Conclusions (0.2 pages)
	Replication
	Acknowledgements

	Questions:
		- Where do our regex-generation efforts fit (do they)?
			Tentative answer: They don't fit anywhere
}


\section{Introduction} \label{section:introduction}


Regular expressions (\emph{regexes}) are a
tool for text processing~\cite{Chapman2016RegexUsageInPythonApps,Davis2018EcosystemREDOS}.
Regexes are used across the system stack~\cite{Li2008REL,Gogte2016HARE,Chiticariu2011SystemT,efftinge2006oaw}, including in security tasks such as
input validation~\cite{Balzarotti2008Saner,Wassermann2008XSS}
and
web application firewalls~\cite{ModSecurityWAFRuleset,ClamAVRegexRules}.
Unfortunately, regexes can themselves cause a security vulnerability because of the high worst-case time complexity of backtracking-based regex engine implementations.
This algorithmic complexity vulnerability is known as 
\emph{Regular Expression Denial of Service (\REDOS)}~\cite{Crosby2003REDOS,Roichman2009ReDoS}.
For example, \REDOS caused service outages at Stack Overflow in 2016~\cite{2016StackOverflowOutage} and at Cloudflare in 2019~\cite{Cloudflare2019REDOSPostMortem}.
Researchers report hundreds of vulnerable regexes in the software supply chain~\cite{Davis2018EcosystemREDOS} and in live web services~\cite{Staicu2018REDOS,barlas2022REDOS}.

Many approaches have been proposed to address the \REDOS problem.
Our work builds on those that try to detect and fix regexes.
In this vein, some researchers characterized vulnerable regexes into anti-patterns for manual use by developers~\cite{Davis2018EcosystemREDOS}.
Others proposed tools to automatically
  detect~\cite{Berglund2014REDOSTheory,Weideman2016REDOSAmbiguity,Weideman2017StaticExpressions,Wustholz2017Rexploiter,liu21sp,Kirrage2013rxxr,Rathnayake2014rxxr2,Sugiyama2014RegexLinearityAnalysis,Sulzmann2017DerivAmbig,Shen2018ReScueGeneticRegexChecker}
  or fix~\cite{VanDerMerwe2017EvilRegexesHarmless,codykenny17,li2020flashregex,claver2021regis}
  vulnerable regexes.
All of these approaches have been evaluated solely via automatic experiments. 
Their \emph{usability} has not been studied, jeopardizing their impact in practice~\cite{johnson2013don} --- 95\% of developers reject tools when they cannot understand the results~\cite{johnson2013don}.

The goal of this paper is to improve the usability of existing \REDOS defenses.
Our insight is that 
the usability of existing tools to detect and fix regexes
will improve if we complement them with 
\emph{anti-patterns} and \emph{fix strategies} of vulnerable regexes.
We specifically aim to improve developer understanding of the outcome of the tools. 

For this goal, we developed novel anti-patterns for vulnerable regexes, and a collection of fix strategies to fix them.
We derive our anti-patterns and fix strategies from our novel theory of regex infinite ambiguity (IA).
Our theory characterizes a fundamental component of vulnerable regexes: their infinitely ambiguous (IA) region.
The IA region is what the state of the art anti-patterns characterize~\cite{Davis2018EcosystemREDOS}, what many detection tools detect, \eg~\cite{Weideman2016REDOSAmbiguity,Wustholz2017Rexploiter}, and what developers often fix in vulnerable regexes~\cite{Davis2018EcosystemREDOS}.
We refer to regexes with an IA region as IA regexes.
Our anti-patterns and fix strategies complement existing detection and fixing tools, (1) by helping developers better understand the IA region of the vulnerable regex detected by the tool;
 and (2) by providing understandable fix strategies in addition to the ones proposed by fixing tools.

Our evaluation proceeded in four phases: proving our theory, and then running three experiments.
\emph{First}, we formally proved the IA theory on which our anti-patterns are based (\cref{section:Theory} and \cref{section:Appendix-Proofs}).
\emph{Second}, since we deliberately introduced inaccuracy in our anti-patterns in favor of simplicity, we evaluated their effectiveness in an automatic experiment.
We compared our anti-patterns to the state-of-the-art ones over a large dataset of regexes (\cref{section:Experiment1}).
\emph{Third}, since low usability may lower effectiveness in manual use~\cite{johnson2013don}, we also evaluated the effectiveness of our anti-patterns when applied manually.
We compared our anti-patterns to the state-of-the-art ones
in a human-subjects experiment,
simulating a context in which
developers often prefer manual techniques~\cite{johnson2013don}, \eg when tools disrupt developer workflow, such as in regex composition or when working with simple regexes (\cref{sec:Experiment2}).
\emph{Fourth}, for more complex tasks, developers may prefer to use automatic tools.
So, we also evaluated how our anti-patterns and fix strategies complement the usage of existing automatic tools by improving their usability.
In a second human-subjects experiment (\cref{sec:Experiment3}), we measured
if our anti-patterns improve the understanding of the outcome of tools to
(a) detect
and
(b) fix
vulnerable regexes
(\cref{sec:Experiment3}).
To the best of our knowledge, this is the first study of the usability of anti-patterns or tools to detect or fix vulnerable regexes when applied by humans.


Our evaluation provided multiple findings.
\emph{First}, our underlying theory of regex infinite ambiguity was sound and complete.
\emph{Second}, our anti-patterns provided higher effectiveness (100\% precision, 99\% recall) than the state of the art anti-patterns~\cite{Davis2018EcosystemREDOS} (50\% precision, 87\% recall) over a dataset of \AnalyzedNRegexes real-world regexes \cite{Davis2019LinguaFranca}.
\emph{Third}, novice and intermediate developers (100\% of \HumanSubjectsNComposition studied) increased their effectiveness at identifying IA in regexes over 5 different regex tasks, improving from a success rate of 50\% to a rate of 100\%.
\emph{Fourth}, the \HumanSubjectsNwithTools expert developers who used our anti-patterns to complement detection tools increased their understanding of what makes a detected regex vulnerable:
  from median \Likert{Very weakly} to median \Likert{Strongly}.
Similarly, when using our fix strategies to complement fixing tools, they increased their understanding of what makes the resulting fixed regex not vulnerable:
  from median \Likert{Very weakly} to median \Likert{Very strongly}.


This paper provides the following contributions:
\begin{enumerate}
  \item A sound and complete theory of regex infinite ambiguity (\cref{section:Theory}).
  \item Derived from this theory, IA anti-patterns (\cref{section:AntiPatterns}) and IA fix strategies (\cref{sec:fix_strategies}).
  \item A quantitative evaluation of the comprehensiveness of our IA anti-patterns over the largest dataset of real-world regexes, showing that they capture IA effectively in a wide proportion of them (\cref{section:Experiment1}).
  \item The first usability evaluation of characterizations of vulnerable regexes, showing that our IA anti-patterns were usable enough for novice developers to apply effectively in the absence of tools (\cref{sec:Experiment2}).
  \item The first usability evaluation of tools to detect and fix vulnerable regexes, showing that our IA anti-patterns and IA fix strategies improved their usability by improving their understanding (\cref{sec:Experiment3}).
\end{enumerate}

Our paper provides a replication package~\cite{hassan2022} (see \cref{section:Appendix-replication}).


\section{Background} \label{section:Background}





\mysection{Regular Expressions (Regexes) and Ambiguity}
\label{section:Background:regex}

\mysubsection{Regexes}
\label{section:background:regex:regexes}
Kleene proposed regular expressions as a notation to specify a language, \ie a set of strings~\cite{kleene1951representation}.
With a finite alphabet of terminal symbols, $\Sigma$, and metacharacters, `$|$', `$\cdot$', and `$*$', 
the regular expression syntax is~\cite{Sipser2006AutomataTheory}:

\vspace{-10px}
\[ R \rightarrow \phi \;\Big{|}\; \epsilon \;\Big{|}\; \sigma \;\Big{|}\; R_1{|}R_2 \;\Big{|}\; R_1{\cdot}R_2 \;\Big{|}\; R_1* \]
\vspace{-15px}



\noindent
where $\phi$ denotes the empty language; 
$\epsilon$ is the empty string; 
the characters $\sigma \in \Sigma$ are terminal symbols;
$R_1|R_2$ alternates; 
$R_1{\cdot}R_2$ concatenates; 
and
$R\ast$ repeats.
The language function $L:R\rightarrow 2^{\Sigma^\ast}$ gives semantics:

\begin{table}[h]
\vspace{-5px}
\centering
{
\normalsize
\begin{tabular}{ll}
$L(\phi) = \phi$                & 
$L(R_1|R_2) = L(R_1) \cup L(R_2)$   \\
$L(\epsilon) = \{\epsilon\}$    &
$L(R_1{\cdot}R_2) = L(R_1){\cdot}L(R_2)$   \\
$L(\sigma) = \{\sigma\}$        &
$L(R\ast) = L(R)\ast$          
\end{tabular}
}
\vspace{-7px}
\end{table}


These semantics apply to Kleene's regexes, and extend to ``syntax sugar'' notations such as 
  character ranges \verb<[a-c]<. 
In practice, regexes may include non-regular features such as lookaround assertions, 
backreferences, 
and possessive quantifiers~\cite{Friedl2002MasteringRegexes}. 
These features are used in less than 10\% of real-world regexes~\cite{Chapman2016RegexUsageInPythonApps,Davis2019RegexGeneralizability,Davis2019LinguaFranca}.
We therefore focus on the common case of Kleene-regular regexes, denoted \emph{K-regexes}.

\mysubsection{Regex Ambiguity}
The regex language semantics allow membership to be checked with a parser.
A regex is \emph{ambiguous} if there is a string in its language that can be matched by more than one parse tree \cite{brabrand2010typed,Sulzmann2017DerivAmbig}.
For example, the regex \verb<a|a< can parse the input ``\emph{a}'' in two ways,
\ie yielding two parse trees, one using the left \verb<a< and one the right.

For K-regexes,
a regex match is equivalent to simulating an input on a corresponding non-deterministic finite automaton (NFA)~\cite{rabin1959finite}.
To simplify discussion, we will reason about regex ambiguity over an equivalent, ambiguity-preserving, $\epsilon$-free NFA~\cite{weber1991degree,Weideman2017StaticExpressions}.
From the NFA perspective, a regex is ambiguous if there is a string that can be accepted along multiple paths of this NFA. 

\mysubsection{Infinitely Ambiguous (IA) Regex}
\label{sec:definition-ia-regex}
Regexes have various degrees of ambiguity~\cite{allauzen2008general,stearns1985equivalence}:
  no ambiguity;
  finite (bounded regardless of input length);
  or
  infinite in input length.
Infinite ambiguity (IA) leads to super-linear time complexity in some parsing algorithms (\eg backtracking)~\cite{Weideman2016REDOSAmbiguity,Wustholz2017Rexploiter}. 
A regex is infinitely ambiguous if it has an infinitely ambiguous (IA) region (equivalently, an NFA section), \ie a region with the \emph{infinite-degree-of-ambiguity (IDA)} property~\cite{weber1991degree}.
Given an $\epsilon$-free finite automaton \textit{A}, necessary and sufficient conditions for \textit{A} to be infinitely ambiguous
are given by Weber \& Seidl~\cite{weber1991degree}. 

IDA can be of two types:
  (1) polynomially IDA (PDA),
  and
  (2) exponentially IDA (EDA).
\cref{fig:ida}(a) illustrates a \emph{polynomially IDA (PDA)} section in a regex's NFA.
A substring $label(\pi_i)$ can be matched in the loop $\pi_1$ at node $p$, the path $\pi_2$ from $p$ to $q$, or in the loop $\pi_3$ at $q$.
For example, consider the regex \verb<a*a*< for an input ``$aa...a$'' of length $N$.
As any two partitions of the input can be matched with the first \verb<a*< and second \verb<a*<, there are $N$ matching paths.

\cref{fig:ida}(b) illustrates an \emph{exponentially IDA (EDA)} section in a regex's NFA.
A substring $label(\pi_i)$ can be matched in either of two loops $\pi_1$ or $\pi_2$ at node $p$.
Consider the example regex \verb<(a|a)*<.
Each `$a$' of the input ``$a...a$'' can be matched by either the upper or lower loop, and thus 
the total number of matching paths becomes $2^N$. 

\begin{figure}[t] 
    
    \centering 
    \begin{minipage}[b]{0.61\columnwidth}
      \centering
      \begin{tikzpicture}[baseline=-1.80em,initial text=]
        \tikzstyle{every state}=[inner sep=1pt, minimum size=0.2cm]
        
        \node[state] (q0) {$...$};
        \node[state, right = 0.4cm of q0] (q1) {$p$}; 
        \draw [->] (q0) edge node [above] {$...$} (q1);
        \draw (q1) edge[loop above] node [left] {\tt $\pi_1$} (q1);
        \node[state, right = 0.4cm of q1] (q2) {$q$}; 
        \draw [->] (q1) edge node [above] {\tt $\pi_2$} (q2);
        \draw (q2) edge[loop below] node [left] {\tt $\pi_3$} (q2);
        \node[state, right = 0.4cm of q2] (q3) {$...$};
        \draw [->] (q2) edge node [above] {$...$} (q3);
        \node at (1,1) (nodeC) {label($\pi_1$) = label($\pi_2$) = label($\pi_3$)};
        
      \end{tikzpicture} 
      \vspace{-5px}
      \caption*{(a) PDA (or $IDA_d$)} 
    \end{minipage}
    \begin{minipage}[b]{0.37\columnwidth}
      \centering
      \begin{tikzpicture}[baseline=-1.80em,initial text=]
        \tikzstyle{every state}=[inner sep=1pt, minimum size=0.2cm]

        \node[state] (q0) {$...$};
        \node[state, right = 0.4cm of q0]     (q1) {$p$}; 
        \draw [->] (q0) edge node [above] {$...$} (q1);
        \draw (q1) edge[loop above] node [left] {\tt $\pi_1$} (q1);
        \draw (q1) edge[loop below] node [left] {\tt $\pi_2$} (q1);
        \node[state, right = 0.4cm of q1] (q2) {$...$}; 
        \draw [->] (q1) edge node [above] {$...$} (q2);
        \node at (1,1) (nodeC) {label($\pi_1$) = label($\pi_2$)};
        
      \end{tikzpicture} 
      \vspace{-10px}
      \caption*{(b) EDA} 
    \end{minipage} 
    \vspace{-15px}
    \caption{
        Illustration of Polynomial and Exponential Degree of Ambiguity (PDA, EDA) in the NFA~\cite{weber1991degree}.
        We say that $(p, \pi_2, q)$ is a transition from state $p$ to state $q$ via label($\pi_2$)~\cite{Wustholz2017Rexploiter}.
    }
    \vspace{-17px}
    \label{fig:ida}
\end{figure}

\mysection{Regex-Based Denial of Service (\REDOS)}
\label{section:Background:redos}
Regex-based Denial of Service (\REDOS)~\cite{Crosby2003REDOS} is a security vulnerability --- an algorithmic complexity attack~\cite{Crosby2003REDOS} by which 
a web service's computational resources are diverted from legitimate client interactions into an expensive regex match, degrading its quality of service.
Following Davis \etal~\cite{davis2021using},
\REDOS involves three Conditions:
\begin{enumerate}[label=(C\arabic*)]
\item a \emph{backtracking regex engine} used in evaluation, and
\item a \emph{vulnerable regex}, applied to evaluate
\item a \emph{malign input}.
\end{enumerate}

\mysubsection{C1-Backtracking Regex Engine}
Many regex engines 
(\eg versions of PHP, 
Perl, 
JavaScript, 
Java, 
Python, 
Ruby, 
and
\CSharp) 
use a backtracking search algorithm, \eg Spencer's~\cite{Spencer1994RegexEngine},
to answer regex queries~\cite{Cox2007RegExBible,Davis2019LinguaFranca}. 

\mysubsection{C2-Vulnerable Regex}
A vulnerable regex is an IA regex whose NFA has
  a \emph{prefix} region, followed by an IA region (either PDA or EDA), followed by a \emph{suffix} region~\cite{Wustholz2017Rexploiter}.
The IA region is a necessary component and the root cause of the regex's vulnerability.
The prefix must be considered to reach this IA region, and the suffix must typically lead to a mismatch in order to trigger backtracking.

\mysubsection{C3-Malign Input}
An attacker-controlled malign input triggers the super-linear behavior of a vulnerable regex by driving the backtracking engine into evaluating a polynomially or exponentially large number of possible NFA paths.
The exploration exhausts computational resources~\cite{Berglund2014REDOSTheory}. 

\vspace{0.1cm}
\mysubsection{Threat model}
We suppose the following threat model for \REDOS, aligned with the common use of regexes for input sanitization in web software~\cite{Chapman2016RegexUsageInPythonApps,Wustholz2017Rexploiter,Michael2019RegexesAreHard}.
The victim's regex engine uses a backtracking regex engine (\REDOS Condition 1),
which is common for many server-side programming languages.
The victim uses a regex (C2) to sanitize attacker-controlled input (C3).

\vspace{0.1cm}
\mysubsection{\REDOS in practice}
Davis \etal reported two high-profile examples of \REDOS affecting millions of users~\cite{davis2020impact,davis2021using}.
In~\cref{Appendix-OtherFigs} we note growing \REDOS CVEs from 2010 to present.

\section{Related Work} \label{section:RelatedWork}

\mysection{Empirical measurements of \REDOS in practice}
\label{sec:related-empirical}
Although the \REDOS attack was proposed twenty years ago by Crosby and Wallach~\cite{Crosby2003REDOS,Crosby2003AlgorithmicComplexityAttacks}, researchers have only recently attempted to estimate its impact.
In 2018, Davis \etal reported that vulnerable regexes were present in many popular open-source software modules, and that engineers struggled to fix them~\cite{Davis2018EcosystemREDOS}; 
in 2019, they observed that these regexes displayed super-linear behavior in the built-in regex engines used in most mainstream programming languages~\cite{Davis2019LinguaFranca,Davis2019RegexGeneralizability}.
Concurrently, Staicu \& Pradel showed that 10\% of Node.js-based web services were vulnerable to \REDOS due to their use of vulnerable npm modules~\cite{Staicu2018REDOS}.
In 2022, Barlas \etal studied the impact of \REDOS in live web services~\cite{barlas2022REDOS}.
Even in non-backtracking engines, Turo{\v n}ov{\'a} \etal observed the impact of \REDOS~\cite{turonova22}.
These findings motivated further research into the \REDOS problem.





\mysection{Characterizations to Manually Detect Vulnerable Regexes}
\label{sec:modeling}
In past work, Davis \etal characterized the IA region of vulnerable regexes with anti-patterns, 
although with
a high false positive rate~\cite{Davis2018EcosystemREDOS}.
Brabrand \& Thomsen's theories precisely identify \emph{un}ambiguous regexes, but treat all others
as suspect, including both IA regexes and merely finitely ambiguous (\ie non-vulnerable) regexes~\cite{brabrand2010typed}.
\footnote{Finite ambiguity could cause \REDOS for complex regexes~\cite{turonova22} or when resources are limited (\eg  a low-power device like a Raspberry Pi).}

In contrast with Davis \etal's anti-patterns, we provide a theoretical grounding to formally capture their limitations (\cref{section:Theory}) and thus provide higher precision and recall (\cref{section:Experiment1}).
We also evaluate their usability when applied manually by humans (\cref{sec:Experiment2}). 
In contrast with Brabrand \& Thomsen's, our theory distinguishes between finite and infinite ambiguity, enabling developers to distinguish between likely-unproblematic (non-IA) and problematic (IA) regexes.

Finally, other characterizations of vulnerable regexes exist, but they were not proposed to be applied manually by humans.
Instead, they follow the models used by automatic detection tools, \eg expressed as finite automata \cite{Weideman2016REDOSAmbiguity,Wustholz2017Rexploiter} (see \cref{fig:ida}).
Contrasting with these other characterizations, we designed ours to be consumed by humans.
Our approach uses the modality of the regex language---the representation that developers understand best~\cite{bai2019exploring,wang2019exploring}.

\mysection{Tools to Automatically Detect Vulnerable Regexes}
\label{sec:detecting}
Berglund \etal defined a prioritized type of NFA to simulate a backtracking engine in Java and decide if a regex could show super-linear behavior~\cite{Berglund2014REDOSTheory}.
Weideman \etal also use a prioritized NFA to find IDA in it~\cite{Weideman2016REDOSAmbiguity,Weideman2017StaticExpressions}.
Wustholz \etal also looks for the IDA pattern in the NFA and computes an attack automaton that produces attack input strings~\cite{Wustholz2017Rexploiter} .
Liu \etal adds support for modeling and analyzing less common regex features, \eg set operations~\cite{liu21sp}. 
Li \etal prescribed five vulnerability patterns, although without theoretical validation~\cite{li2021redoshunter}. 

Others statically analyze different representations of the regex for vulnerability.
Kirrage \etal analyze an abstract evaluation tree of the regex~\cite{Kirrage2013rxxr}.
Rathnayake \etal look for exponential branching in the regex evaluation tree~\cite{Rathnayake2014rxxr2}.
Sugiyama \etal analyzes the size of a tree transducer for the regex~\cite{Sugiyama2014RegexLinearityAnalysis}. 
Finally, 
Sulzmann \etal use Brzozowski derivatives to create a finite state transducer to generate parse trees and minimal counter-examples~\cite{Sulzmann2017DerivAmbig}. 

Still other approaches detect vulnerable regexes using dynamic analysis.
Shen \etal and McLaughlin \etal proposed search algorithms to find inputs with super-linear matching time~\cite{Shen2018ReScueGeneticRegexChecker}~\cite{mclaughlin2022regulator}. 
More general algorithmic complexity detectors, \eg~\cite{Petsios2017SlowFuzz,wei2018singularity,noller2018badger,meng2018rampart,blair2020hotfuzz}, can also be extended to detect \REDOS.

Vulnerable regex detection tools have been evaluated for effectiveness, but not for usability.
Our anti-patterns complement these tools by improving developer understanding of the outcome of their detection (\cref{sec:Experiment3}).

\mysection{Tools to Automatically Fix Vulnerable Regexes}
\label{sec:repairing}
These approaches offer trade-offs for the fixed regex, in: semantic similarity, (perceived) readability, and support for uncommon features.
Van der Merwe \etal presented a modified flow algorithm to convert an ambiguous K-regex into an equivalent unambiguous one~\cite{VanDerMerwe2017EvilRegexesHarmless}, with perfect semantic equivalence, but lower readability.
More recently, Li \etal proposed an approach to fix vulnerable regexes with deterministic regex constraints to avoid regex ambiguity~\cite{li2020flashregex}.
Chida \& Terauchi proposed a ``Programming By Example'' approach that supports K-regexes, lookarounds, capture groups, and backreferences~\cite{chida2022repairing}.
Both approaches use a human in the loop to provide good examples~\cite{li2020flashregex,chida2022repairing}.
Finally, Claver \etal~\cite{claver2021regis} proposed a synthesis-based approach that they evaluate with synthetic regexes.

These tools have been evaluated for effectiveness, but not for usability.
Our fix strategies complement these tools by improving developer' understanding of the fix (\cref{sec:Experiment3}).

\mysection{Non-regex-based Workarounds}
\label{sec:workarounds}

\mysubsection{Recovering From \REDOS}
After a system containing a \REDOS vulnerability is deployed, it is possible to detect and mitigate \REDOS attacks.
Bai \etal proposed a \REDOS-specific approach, applying deep learning to detect and sandbox attack strings~\cite{bai2021runtime}.
Atre \etal proposed using adversarial scheduling to mitigate adversarial complexity attacks~\cite{atre2022surgeprotector}.
Approaches that detect anomalous resource utilization, \eg time~\cite{davis2018sense}, CPU~\cite{meng2018rampart}, or application-level concepts~\cite{demoulin2019detecting}, can also mitigate \REDOS.
These approaches reduce the impact of \REDOS, but do not remove the root cause.

\mysubsection{Changing the regex engine}
There are both classic and more recent alternatives to the exponential-time backtracking regex algorithm.
The earliest published regex matching algorithms, by Brzozowski in 1964~\cite{brzozowski1964derivatives} and Thompson in 1968~\cite{Thompson1968LinearRegexAlgorithm}, offer linear-time guarantees.
There are production-grade implementations of Thompson's approach, notably RE2~\cite{Cox2010RE2Implementation} and the engines in Rust~\cite{RustRegexDocs} and Go~\cite{GoRegexDocs}.
Microsoft has considered Brzozowski's approach for .NET~\cite{saarikivi2019symbolic}, as well as deterministic~\cite{holik2019succinct,turovnova2020regex} or hybrid~\cite{sung22} matching strategies. 
However, programming language maintainers have been slow to adopt these algorithms because of the risk of regression and the limited support for non-regular regex features~\cite{davis2021using}.

\section{Theory of Regex Infinite Ambiguity} \label{section:Theory}

Here we introduce an existing theory of regex ambiguity \cite{brabrand2010typed}, discuss its limitations, and 
present our novel theorems. 

Recalling~\cref{section:Background}, a regex with an infinite degree of ambiguity (IA)~\cite{weber1991degree} has the necessary condition for super-linear regex behavior~\cite{Davis2018EcosystemREDOS,Crosby2003REDOS,Crosby2003AlgorithmicComplexityAttacks}.
Though the NFA-level conditions for IA regexes (namely PDA and EDA regions) are well known~\cite{weber1991degree}, we lack characterizations in terms of regex syntax and semantics.
We provide such a description to support developers assessing or composing regexes.

\mysection{Preliminaries}
Brabrand \& Thomsen~\cite{brabrand2010typed} developed the state of the art description of regex-level ambiguity.
They introduced an overlap operator, $\xblnot$, between two languages $L(R_1)$ and $L(R_2)$.
The set $L(R_1) \ \xblnot \ L(R_2)$ contains the ambiguity-inducing strings that can be 
parsed in multiple ways across $L(R_1)$ and $L(R_2)$.
More formally, with $X = L(R_1)$ and $Y = L(R_2)$,
\[
X \ \xblnot \ Y = \{ xay \ | \ x,y \in {\Sigma}^{*} \wedge a \in {\Sigma}^{+} s.t. \ x,xa \in X \wedge ay,y \in Y \}
\]

\vspace{-2px}
\noindent
Using this operator, \cref{thm:Brabrand} summarizes their findings.

\begin{framedThm}[Brabrand \& Thomsen~\cite{brabrand2010typed}] \label{thm:Brabrand}
    Given unambiguous regexes $R_1$ and $R_2$:
    \begin{enumerate}[label=(\alph*)]
        \item $R_1|R_2$ is unambiguous iff $L(R_1) \cap L(R_2) = \phi$.
        \item $R_1{\cdot}R_2$ is unambiguous iff $L(R_1) \ \xblnot \ L(R_2) = \phi$.
        \item $R_{1}*$ is unambiguous iff $\epsilon \notin L(R_1) \wedge L(R_1) \ \xblnot \ L(R_{1}*) = \phi$.
    \end{enumerate}
\end{framedThm}

In their implementation of this Theorem, Brabrand \& Thomsen use M{\o}ller's BRICS library~\cite{moller2010dk}, and actually rely on what we call the M{\o}ller overlap operator, $\Omega$.
We use this operator in our theorems.
The M{\o}ller overlap operator describes only the ambiguous core ``$a$'':
\[
X \ \Omega \ Y = \{ \exists \ x, y \in {\Sigma}^{*} \wedge a \ | \ a \in {\Sigma}^{+}  s.t. \ x, xa \in X \wedge ay, y \in Y \}
\]
\vspace{-8px}





\myparagraph{Limitation}
Given unambiguous regex components, \cref{thm:Brabrand} specifies when a composed regex remains unambiguous. 
Yet not all ambiguity is harmful. 
For example, the regex \verb<\w|\d< is finitely ambiguous.
This regex formulation may improve readability~\cite{aho2020compilers}; it is not a \REDOS risk.

\mysection{Regex Infinite Ambiguity Theorems} \label{section:theory-proofs}
This section presents our regex ambiguity theory for composition with alternation (\cref{thm:alternation}), concatenation (\cref{thm:concat}), and star (\cref{thm:star}).
Here we give proof sketches, examples, and the \REDOS implications.
Full proofs are in~\cref{section:Appendix-Proofs}.



\begin{framedThm}[Ambiguity of Alternation] \label{thm:alternation}
    Given unambiguous regexes $R_1$ and $R_2$,
    \begin{enumerate}[label=(\alph*)]
        \item $R_1|R_2$ is finitely ambiguous iff $L(R_1) \cap L(R_2) \neq \phi$.
        \item $R_1|R_2$ cannot be infinitely ambiguous.
    \end{enumerate}
\end{framedThm}

\myparagraph{Proof sketch}
The theorem states that given unambiguous regexes $R_1$ and $R_2$, if $R_1|R_2$ is ambiguous, then it is always finitely ambiguous.
Since $R_1$ and $R_2$ are both unambiguous, for any matching input $w$, there is only one path through $R_1$ and $R_2$.
Therefore, for $R_1|R_2$ and any matching input $w$, there are at most two matching paths.

\myparagraph{Example} 
For regex \verb<a*|a*<, consider input ``$a...a$'' of length $N$.
Regardless of input length, the number of accepting paths will be 2:
via the first $a*$ or the second $a*$.

\myparagraph{\REDOS implications} 
If two regexes $R_1$ and $R_2$ are unambiguous, $R_1|R_2$ is always safe (cannot form IA).

\begin{framedThm}[Ambiguity of Concatenation] \label{thm:concat}
    Suppose unambiguous regexes $R_1$ and $R_2$,
    and that
      $L(R_1) \ \xblnot \ L(R_2) \neq \phi$
    (so $R_1{\cdot}R_2$ is ambiguous by~\cref{thm:Brabrand}).
    Then:
    \begin{enumerate}[label=(\alph*)]
        \item $R_1{\cdot}R_2$ is infinitely ambiguous iff 
        $L(R_1)$ contains the language of a regex \verb<BC*D< and 
        $L(R_2)$ contains the language of a regex \verb<EF*G<, 
        where $\epsilon \notin L($\verb<C<$) \wedge \epsilon \notin L($\verb<F<$) \wedge L($\verb<C<$) \cap L($\verb<F<$) \cap L($\verb<DE<$) \neq \phi$.
        \item Otherwise, $R_1{\cdot}R_2$ must be finitely ambiguous.
    \end{enumerate}
\end{framedThm}

\myparagraph{Proof sketch}
$\impliedby$:
Consider the string ``\emph{bcc...cdeff...fg}'' $\in L(R_1 \cdot R_2)$ where $c=f=de$.
It can be divided into two strings ``\emph{bcc...cd}'' $\in L($\verb<BC*D<$) \subseteq L(R_1)$ and ``\emph{eff...fg}'' $\in L($\verb<EF*G<$) \subseteq L(R_2)$.
By hypothesis, we can repeat the substring ``\emph{de}'' arbitrarily many times, and the resulting string can be matched in $R_1$ (by \verb<C*<) or in $R_2$ (by \verb<F*<).
We can choose an arbitrarily long string and obtain arbitrary ambiguity in $R_1{\cdot}R_2$.

$\implies$:
Suppose $R_1{\cdot}R_2$ is infinitely ambiguous.
The NFA corresponding to $R_1{\cdot}R_2$ cannot contain the EDA structure because this requires a self-loop --- \ie that $R_1$ or $R_2$ is already ambiguous. Therefore the NFA of $R_1{\cdot}R_2$ must contain a PDA structure, as shown in~\cref{fig:ida}(a).
We can map the two loops $\pi_1$ and $\pi_3$ with \verb<C*< and \verb<F*< respectively; and the bridge $\pi_2$ with \verb<DE< in the regex representation,  where $L($\verb<C<$) \cap L($\verb<F<$) \cap L($\verb<DE<$) \neq \phi$.

\myparagraph{Example}
For regex \verb<(a*a)(aa*)< on input ``$aa...a$'' of length $N$, 
there are $N$ accepting computations, one for each of the indices of the input dividing the string into a left half consumed by $R_1$ and a right half consumed by $R_2$.

\myparagraph{\REDOS implications}
Though two regexes $R_1$ and $R_2$ are unambiguous, $R_1{\cdot}R_2$ could be IA, thus concatenation should be used with care.
\cref{thm:concat}(a) implies that for $R_1{\cdot}R_2$ to be IA, there must be a star component in both $R_1$ and $R_2$.
In~\cref{section:AntiPatterns:concat}, we introduce three forms of concatenation anti-patterns based on this observation.

\begin{framedThm}[Ambiguity of Star] \label{thm:star}
    Given unambiguous regex R, 
    \begin{enumerate}[label=(\alph*)]
        \item $R*$ is infinitely ambiguous iff $ \epsilon \in L(R) \vee L(R) \ \Omega \ L(R*) \neq \phi$.
        \item $R*$ cannot be finitely ambiguous.
    \end{enumerate}
\end{framedThm}

\myparagraph{Proof sketch}
%
The theorem states that given an unambiguous regex $R$, if $R*$ is ambiguous, then it is always infinitely ambiguous.
Suppose $R*$ is ambiguous.
Then there is some input $w$ that it can match in $k$ ways, $k > 1$.
So there is an input $ww$ that it can match in $k*k = k^2$ ways.
The degree of ambiguity increases as a function of input length.

\myparagraph{Example}
For the regex \verb<(a*)*<, consider input ``$aaa...a$'' of length $N$.
There are two ways (inner \verb<*< or outer \verb<*<) to match each ‘$a$’, making the total number of ways to match to be $2^N$.

\myparagraph{\REDOS implications} 
Even though an original regex $R$ is unambiguous, $R*$ can be IA.
In~\cref{section:AntiPatterns:star}, we give an anti-pattern that only checks for a subset of conditions for simplicity.

\begin{framedThm}\label{thm:star-lemma}
    Given a finitely ambiguous regex $R$, $R*$ is always infinitely ambiguous.
\end{framedThm}

\myparagraph{Proof sketch}
The proof follows the logic of~\cref{thm:star}.

\myparagraph{Example}
For the regex \verb<(a|a)*<, consider an input ``$aaa...a$'' of length $N$.
There are two ways (first \verb<a< or second \verb<a<) to match each ‘$a$’ of the input, for $2^N$ matches in all. 

\myparagraph{\REDOS implications} 
If $R$ is finitely ambiguous, from alternation ($R=P|Q$) or concatenation ($R=P{\cdot}Q$), $R*$ is always IA. 
Later in~\cref{section:AntiPatterns:star}, we introduce two anti-patterns of the form $(P|Q)*$.




\section{Anti-patterns for Regex Infinite Ambiguity} 
\label{section:AntiPatterns}






This section describes anti-patterns for IA regexes (\emph{IA anti-patterns}), derived from the preceding theory of regex infinite ambiguity.
Ideal anti-patterns would be as sound and complete as the theory, but this goal must be balanced against usability. 
With this in mind, we iteratively extracted IA anti-patterns from the theory by dropping clauses from theorems or combining the theorems in different ways.
These anti-patterns were refined through internal discussion.
We evaluate these anti-patterns in~\cref{section:Experiment1} and~\cref{sec:Experiment2}.

\cref{tab:our_antipatterns} summarizes our IA anti-patterns.
As alternation alone does not make a regex IA (\cref{thm:alternation}),
  there are \emph{Concatenation} anti-patterns derived from~\cref{thm:concat},
  and \emph{Star} anti-patterns derived from~\cref{thm:star,thm:star-lemma}. 

\begin{table*}[t]
\centering
\small
\caption{
  Our proposed IA anti-patterns.
  Each row indicates
    the anti-pattern,
    the theorem(s) from which it was derived,
    a description,
    and
    an example of how the anti-pattern leads to ambiguity.
}
\begin{tabular}{@{} p{0.10\linewidth} @{} p{0.08\linewidth} @{} p{0.44\linewidth} p{0.34\linewidth} @{}}
\toprule
\textbf{Anti-pattern} & \centering\textbf{Thm.} & \textbf{Description}                                                                          & \textbf{Example}                                                         \\ \midrule
Concat 1              & \centering 2 &
\verb<R< = \verb<...P*Q*...< (\verb<R< has a sub-regex \verb<P*Q*<) --- The two quantified parts \verb<P*< and \verb<Q*< can match some shared string $s$. & 
\verb<\w*\d*< --- both classes can match digits [0-9].
\\ \midrule
Concat 2              & \centering 2 &
\verb<R< = \verb<...P*SQ*...<  --- The two quantified parts \verb<P*< and \verb<Q*< can match a string $s$ from the middle part $S$. & 
\verb<\w*0\d*< --- the repeated classes \verb<\w< and \verb<\d< can match the middle part $0$. 
\\ \midrule
Concat 3              & \centering 2 &
\verb<R< = \verb<...P*S*Q*...< --- Advanced form of Concat 1. 
Since \verb<S*< includes an empty string, the ambiguity between \verb<P*< and \verb<Q*< can be realized. & 
\verb<\w*:*\d*< --- The classes \verb<\w< and \verb<\d< overlap, and the intervening \verb<:*< can be skipped. 
\\ \midrule
Star 1                & \centering 1, 4 &
\verb<R*<, \verb<R<= \verb<(P|Q|...)< ---
There is an intersection between any two alternates, \ie both match some shared strings. & 
\verb<(\w|\d)*< --- both classes match digits [0-9]. 
\\ \midrule
Star 2                & \centering 3 &
\verb<R*<, \verb<R<= \verb<(P|Q|...)< --- You can make one option of the alternation by repeating another option multiple times or by concatenating two or more options multiple times. &
\verb<(a|b|ab)*< --- The 3rd option, $ab$, matches combinations of the first and second options.
\\ \midrule
Star 3                & \centering 3 &
\verb<R*<, \verb<R<= \verb<(...P*...)< --- Nested quantifiers, provided \verb<RR< follows any of the Concat anti-patterns. & 
Expanding \verb<R=(0?\w*)*< to \verb<RR< yields \verb<0?\w*0?\w*<, which is IA by Concat 3.
Similarly, \verb<R=(xy*)*< yields \verb<xy*xy*<; this is not IA by any Concat anti-pattern. 
\\ \bottomrule
\end{tabular}
\label{tab:our_antipatterns}
\end{table*}

\mysection{Concatenation Anti-patterns} 
\label{section:AntiPatterns:concat}
The Concat anti-patterns come from~\cref{thm:concat}.
In~\cref{thm:concat}, a regex $R$ concatenates regexes $R_1$ and $R_2$ that contain the languages \verb|BC*D| and \verb|EF*G|, respectively. 
The theorem states that the potential vulnerability occurs in the sub-regex \verb|C*DEF*|, which we write in simplified form as \verb|P*SQ*| for our anti-patterns.
We call \verb|S| the ``bridge'' between \verb|P*| and \verb|Q*|.

\myitparagraph{Concat-1}
This anti-pattern,
  where $P*Q*$ is a sub-regex of $R$,
represents the simplest form without the bridge $S$.
Developers must find a string matched in both $P*$ and $Q*$.

\myitparagraph{Concat-2}
This anti-pattern,
  where $P*SQ*$ is a sub-regex of $R$,
has the bridge $S$ component.
Developers must find a string matched in all $P*$, $Q*$, and $S$.

\myitparagraph{Concat-3}
This anti-pattern,
  where $P*S*Q*$ is a sub-regex of $R$,
is the case with optional bridge $S$.
Like Concat-1, developers must find a string matched in both $P*$ and $Q*$.

\myparagraph{Gap Analysis} 
The Concat anti-patterns represent all possible ways that the bridge component $DE$ (from \cref{thm:concat}) may appear as a sub-regex of the form $\epsilon$, $S$, or $S*$.
Thus, there is no gap between theory and anti-patterns.


\mysection{Star Anti-patterns}
\label{section:AntiPatterns:star}
The Star anti-patterns come from~\cref{thm:star} and~\cref{thm:star-lemma}. 

\myitparagraph{Star-1 and Star-2} These anti-patterns are designed to prevent (some) regexes of the form $R*$ where $R=(P|Q|...)$.
\cref{thm:star-lemma} states that if $R$ is finitely ambiguous, then $R*$ becomes IA.
From~\cref{thm:alternation}(a), alternations may introduce finite ambiguity.
The Star-1 anti-pattern describe the condition when the subregex $(P|Q|...)$ becomes finitely ambiguous. The Star-2 anti-pattern describe the condition when the non-ambiguous $(P|Q|...)$ form IA with the help $*$ according to~\cref{thm:star}.

\myparagraph{Gap Analysis}
There is a gap between~\cref{thm:star-lemma} and the Star-1 anti-pattern.
Star-1 does not consider all possible forms of finitely ambiguous regexes. 
For instance, the concatenation may also introduce finite ambiguity (\cref{thm:concat}(b)).
Thus, some regexes of the form $(P{\cdot}Q)*$ could be IA as well: \eg \verb<((a|ab)(c|bc))*<.
 Also, the Star-2 anti-pattern is one of the conditions that incorporate~\cref{thm:star}. Thus regexes under missing conditions would appear as false negatives for these anti-patterns.

\myitparagraph{Star-3} This anti-pattern prevent (some) regexes of the form $R*$ where $R$ has a sub-regex $P*$.
\cref{thm:star} states the conditions when $R*$ becomes IA.
considering the first condition $ \epsilon \in L(R)$ is relatively trivial.
Yet, the second condition $L(R) \Omega L(R*) \neq \phi$ requires reasoning about a language overlap between $L(R)$ and an arbitrary repetition of $L(R*)$, which could be tricky. 
Based on the common knowledge that a nested quantifier (\eg $(P*)*$) is bad~\cite{safeRegexHomePage},
the Star-3 anti-pattern only considers the case where $R$ has a sub-regex $P*$, as a generalized form of nested quantifiers.
The Star-3 anti-pattern further simplifies the condition and asks developers to consider the overlap between $L(R)$ and (twice-repeated) $L(R{\cdot}R)$, using the Concat anti-patterns.

\myparagraph{Gap Analysis}
The Star-3 anti-pattern does not incorporate all the conditions in \cref{thm:star}.
Regexes with the missing conditions would appear as false negatives. 

\section{Fix Strategies for Regex Infinite Ambiguity}
\label{sec:fix_strategies}


This section describes five fix strategies (F1--F5) that can be broadly applied across the different IA anti-patterns.
The fix strategies are derived from various ways of invalidating necessary conditions of~\cref{thm:concat} and~\cref{thm:star}.
The proposed fix strategies do not always preserve semantics.

\mysection{Fix strategies}
\cref{tab:fix_strategies} summarizes the proposed five fix strategies along with examples for each anti-pattern.
We evaluate their effectiveness in Experiment 3 (\cref{sec:Experiment3}).

\begin{table*}[]
\centering
\caption{
  Fix strategies.
  Each strategy is illustrated with respect to each anti-pattern, within the limit of the example provided. 
}
\label{tab:fix_strategies}
\resizebox{0.99\textwidth}{!}{%


\begin{tabular}{@{}l@{\enspace}l@{}c@{\enspace}c@{\enspace}c@{\enspace}c@{\enspace}c@{\enspace}c@{\enspace}c@{\enspace}c@{}}
\toprule
\multirow{3}{*}{\textbf{Fix}} &
  \multirow{3}{*}{\textbf{Description}} &
  \textbf{Concat 1}  &
  \textbf{Concat 2}  &
  \textbf{Concat 3}  &
  \textbf{Star 1} &
  \textbf{Star 2} &
  \textbf{Star 3} & 
  \multirow{3}{*}{\textbf{Freq.}}  \\
 &
   &
  \emph{Anti-pattern}: ...P*Q*... &
  ...P*SQ*... &
  ...P*S*Q*... &
  (P\textbar{}Q\textbar{}...)* &
  (P\textbar{}Q\textbar{}...)* &
  (...P*...)* & 
   \\
 &
   &
  \emph{Example}: \ \ \ \textbackslash{}w*\textbackslash{}d* &
  \textbackslash{}w*0\textbackslash{}d* &
  \textbackslash{}w*:*\textbackslash{}d* &
  (\textbackslash{}w\textbar{}\textbackslash{}d)* &
  (a\textbar{}b\textbar{}ab)* &
  (0?\textbackslash{}w*)* & 
  \\ \midrule
F1 &
  \begin{tabular}[c]{@{}l@{}}Add a delimiter between the \\ sub-regexes P and Q that \\ can match a shared string.\end{tabular} &
  \textbackslash{}w*:\textbackslash{}d* &
  \textbackslash{}w*:0\textbackslash{}d* &
  \textbackslash{}w*:+\textbackslash{}d* &
  (\textbackslash{}w*\textbar{}:\textbackslash{}d)* & (a:$|$b$|$ab)*
   &
  (:0?\textbackslash{}w*)* & 12 \\ \midrule
F2 &
  \begin{tabular}[c]{@{}l@{}}Reduce one of the sub-regexes \\ P and Q so that it no longer \\ matches any of the shared strings.\end{tabular} &
  {[}a-zA-Z\_{]}*\textbackslash{}d* &
  {[}a-zA-Z\_{]}*0\textbackslash{}d* &
  {[}a-zA-Z\_{]}*:*\textbackslash{}d* &
  ({[}a-zA-Z\_{]}\textbar{}\textbackslash{}d)* & (b$|$ab)*
   & (0?{[}a-zA-Z\_{]})* & 10 \\ \midrule
F3 &
  \begin{tabular}[c]{@{}l@{}}Reduce both the sub-regexes \\ P and Q so that they no longer \\ match any of the shared strings, \\ and add the shared string(s) in a \\ disjunction. In many cases, this \\ will resemble making a superset \\ of the two sub-regexes.\end{tabular} &
  \textbackslash{}w* &
  N/A &
  N/A &
  \textbackslash{}w* & (a$|$b)*
   &
  \textbackslash{}w* & 3 \\ \midrule
F4 &
  \begin{tabular}[c]{@{}l@{}}Reduce or remove repetition in at \\ least one of the sub-regexes 
  \\ P and Q that match a shared string.\end{tabular} &
  \textbackslash{}w\{,10\}\textbackslash{}d\{,10\} &
  \textbackslash{}w\{,10\}0\textbackslash{}d\{,10\} &
  \textbackslash{}w\{,10\}:*\textbackslash{}d\{,10\} &
  (\textbackslash{}w\textbar{}\textbackslash{}d)\{,10\} & (a$|$b$|$ab)\{,10\}
   &
  (0?\textbackslash{}w\{,10\})\{,10\} & 13 \\ \midrule
F5 &
  \begin{tabular}[c]{@{}l@{}}Remove or substantially modify \\ the sub-regexes P and Q and \\ add logic for 
  the semantic changes. \end{tabular} &
  \begin{tabular}[c]{@{}c@{}}add logic to \\ catch non-digits \\ then use \textbackslash{}d*\end{tabular} &
  \begin{tabular}[c]{@{}c@{}}add logic to \\ catch non-digits \\ then use 0\textbackslash{}d*\end{tabular} &
  \begin{tabular}[c]{@{}c@{}}add logic to \\ catch non-digits \\ then use :*\textbackslash{}d*\end{tabular} &
  \textbackslash{}w*\textbar{}\textbackslash{}d* & a+$|$b+$|$(ab)+
   & \begin{tabular}[c]{@{}c@{}}add logic to \\ catch 0, then\\ use \textbackslash{}w*\end{tabular}
   & 16 \\ 
   \bottomrule
\end{tabular}

}
\end{table*}

\begin{enumerate}

\item The first fix strategy (F1) is to add a delimiter between the subregexes \verb<P< and \verb<Q< of the anti-patterns (\eg \verb<P*Q*<, \verb<(P|Q)*<) that can match the shared string(s).
More precisely, the delimiter makes $L(C) \cap L(DE) = \phi$ and/or $L(F) \cap L(DE) = \phi$ in~\cref{thm:concat}(a); and $L(R) \ \Omega \ L(R*) = \phi$ in~\cref{thm:star}(a). 
For instance, consider the regex \verb<\w*\d*< (Concat 1). If we add a delimiter `:', 
the new regex \verb<\w*:\d*< becomes non-IA because $L(\verb<\w<) \cap L(\verb<:<) = \phi$ and $L(\verb<\d<) \cap L(\verb<:<) = \phi$. 
\cref{tab:fix_strategies} provides examples for the other anti-patterns.

\item The second fix strategy (F2) is to reduce one of the subregexes \verb<P< and \verb<Q< so that it no longer matches any of the shared strings.
In other words, the fix F2 makes $L(C) \cap L(F) = \phi$ in~\cref{thm:concat}(a); and $L(R) \ \Omega \ L(R*) = \phi$ in~\cref{thm:star}(a). 
For instance, refer to the same regex \verb<\w*\d*< (Concat 1). 
If we reduce \verb<\w< to \verb<[A-Za-z_]< so that it does not overlap with \verb<\d<,
the new regex \verb<[A-Za-z_]*\d*< is not IA since $L(\verb<[A-Za-z_<]) \cap L(\verb<\d<) = \phi$.

\item The third fix strategy (F3) is to reduce the subregexes \verb<P< and \verb<Q< so that they no longer match any of the shared strings, and add the shared string(s) in a disjunction. In many cases, this will resemble making a superset of the two sub-regexes. 
Effectively, the fix F3 has the same effect as F2 that excludes any shared string(s), yet it additionally keeps the shared string(s) in a disjunction, making the fix semantic-preserving.
For example, consider \verb<(\w|\d)*< (Star 1).
Suppose we reduce \verb<\w< to \verb<[A-Za-z_]< and reduce \verb<\d< to null so that they no longer match the shared string(s) \verb<[0-9]<.
Then we add the subregex \verb<[0-9]< in a disjunction.
Finally, we get \verb<([A-Za-z_]|[0-9])*<, which is equivalent to \verb<\w*<.
Note that \verb<\w< is a superset of \verb<\w< and \verb<\d<, and the old and new regexes match the same language.

\item The fourth fix strategy (F4) is to reduce or remove repetition in at least one of the subregexes P and Q so that the shared string(s) cannot be matched infinitely.
Both~\cref{thm:concat} and~\cref{thm:star} require an unbounded repetitions (a star quantifier).
The fix F4 in effect turns unbounded repetitions to bounded ones.
For example in Star 1 anti-pattern, we can replace the regex \verb<(\w|\d)*< with \verb<(\w|\d){0,10}< permitting only up to 10 repetitions.

\item The fifth fix strategy (F5) is to remove or substantially modify the subregexes P and Q and handle semantic changes elsewhere. 
The fix F5 capture general non-systematic fixes that may introduce more semantic changes than the other fixes.
For example, the regex \verb<(\w|\d)*< can be fixed to \verb<\w*|\d*<.

\end{enumerate}

\myparagraph{Evaluation of Practical Relevance}
\label{sec:fix_strategies:popularity}
We analyzed the 54 
developer-created regex fixes reported by Davis \etal~\cite{Davis2018EcosystemREDOS}.
We classified each fix into one of these five strategies.
The last column in \cref{tab:fix_strategies} reports the frequency of each fix strategy.
We also observed that developers value simplicity in the fix.
To preserve the original (vulnerable) regex's structure, they introduced semantic changes (51/54 = 94\%).
\section{Experiment 1: Effectiveness of Anti-patterns}
\label{section:Experiment1}
Our proposed IA anti-patterns (\cref{section:AntiPatterns}) were derived from our theory (\cref{section:Theory}), but we deliberately introduced inaccuracy in favor of simplicity.
In this section, we evaluate the impact of these deviations 
over the largest available regex corpus~\cite{Davis2019LinguaFranca}.
We measured effectiveness using precision and recall.

\subsection{Experimental Design} \label{sec:Experiment1-Design}


\mysection{Studied Techniques}
\label{section:Experiment1:techniques}
\emph{Our IA anti-patterns.}
We detected each anti-pattern using static analysis. 
We parsed regexes in PCRE format~\cite{Friedl2002MasteringRegexes} using an ANTLR 4 grammar and parser~\cite{ANTLRPCRE}.
We used the BRICS~\cite{moller2010dk} tool to check whether multiple sub-regex parts can generate any shared string.

We implemented our IA anti-patterns to support the common case of K-regexes, \ie regexes that use only Kleene-regular regexes (cf. \cref{section:background:regex:regexes}).
Our prototype also excludes extended POSIX and Unicode character classes for simplicity.
These limitations are consistent with 
  past approaches~\cite{Berglund2014REDOSTheory, Weideman2016REDOSAmbiguity, Weideman2017StaticExpressions, Wustholz2017Rexploiter, Kirrage2013rxxr, Rathnayake2014rxxr2, Sugiyama2014RegexLinearityAnalysis, Sulzmann2017DerivAmbig}. 



\emph{State-of-the-art (SOA) anti-patterns.}
For comparison, we also executed the state-of-the-art (SOA) anti-patterns that characterize IA regexes~\cite{Davis2018EcosystemREDOS}.
We used the automatic detector provided by Davis \etal~\cite{Davis2018EcosystemREDOS}.
These anti-patterns lack a theoretical basis, so we expect them to perform worse.
Davis \etal described three anti-patterns, listed in \cref{tab:davis_antipatterns}\footnote{As might be expected, these anti-patterns resemble those presented in~\cref{tab:our_antipatterns}. The main difference is in the nuanced definition of ``overlap'' available from our IA theory.}

\begin{table}[t]
\centering
\small
\caption{
  State of the art IA anti-patterns, as described by Davis \etal \cite{Davis2018EcosystemREDOS}.
  Each row indicates
    the anti-pattern,
    its description.
    and
    an example.
}
\begin{tabular}{@{} p{0.25\linewidth} @{} p{0.7\linewidth}}
\toprule
\textbf{Anti-pattern} & \textbf{Description}                                                         
\\ \midrule
QOA (Quantified Overlapping Adjacency) \newline \newline Example: \verb</\w*#?\w*/<             &
\emph{The two quantified} \verb<\w*< \emph{nodes overlap, and are adjacent because one can be reached from the other by skipping the optional octothorpe. From each node we walk forward looking for a reachable quantified adjacent node with an overlapping set of characters, stopping at the earliest of: a quantified overlapping node (QOA), a non-overlapping non-optional node (no QOA), or the end of the nodes (no QOA).}
\\ \midrule
QOD (Quantified Overlapping Disjunction) \newline \newline Example: \verb</(\w|\d)+/<              &
\emph{Here we have a quantified disjunction} \verb<(/(...|...)+/)<, \emph{whose two nodes overlap in the digits, 0-9.}
\\ \midrule
Star height \textgreater 1 \newline \newline Example: \verb</(a+)+/<              &
\emph{To measure star height, we traverse the regex and maintain a counter for each layer of nested quantifier: +, *, and check if the counter reached a value higher than 1. In such cases, the same string can be consumed by an inner quantifier or the outer one, as is the case for the string ``a'' in the regex} \verb</(a+)+/<.
\\ \bottomrule
\end{tabular}
\label{tab:davis_antipatterns}
\end{table}

\mysection{Ground Truth} \label{section:Experiment1:groundtruth}
We assessed ground truth for whether a regex is IA using 
Weideman \etal's detector~\cite{Weideman2017StaticExpressions,Weideman2016REDOSAmbiguity}.
This detector tests if a regex is IA by analyzing its NFA (\cref{fig:ida}).
Since the Weideman tool uses automata theory instead of regex semantic theory, it provides an independent check on the anti-patterns (and our underlying theory).


\mysection{Metrics}
The standard metrics of precision and recall~\cite{ting10}.

\mysection{Dataset} \label{section:Experiment1:dataset}
We evaluated the studied anti-patterns in the largest available dataset of real-world regexes~\cite{Davis2019LinguaFranca} (\CorpusNRegexes regexes). 
This dataset has been used by previous studies for
  measuring \REDOS~\cite{Davis2019LinguaFranca},
  fixing \REDOS~\cite{davis2021using,chida2020automatic},
  and
  measuring general characteristics of regexes~\cite{Davis2019RegexGeneralizability}.
We analyzed \AnalyzedNRegexes regexes from this dataset
--- one order of magnitude larger than the evaluation of past \REDOS-detection approaches (15,000--30,000 regexes~\cite{liu21sp,Shen2018ReScueGeneticRegexChecker,Weideman2016REDOSAmbiguity,Weideman2017StaticExpressions}).

We curated the dataset for this experiment:
(1) We removed the 295,151 regexes that were not supported by the ground truth tool~\cite{Weideman2017StaticExpressions}).\footnote{While our implementation of our IA anti-patterns supported a larger percentage of the dataset, we discarded those for which we could not collect ground truth.}
According to our ground truth, 32,005 of our studied regexes were IA.
(2) We discarded 32,413 additional regexes that were not supported by the BRICS library~\cite{moller2010dk} used in our anti-pattern prototype.
(3) We discarded 1,054 additional regexes with POSIX or Unicode character classes not supported by our implementation.

The implementation of our IA anti-patterns supports 450,753 regexes (83.8\%) of the dataset (10.2\% had advanced or non-regular features; 6\% unsupported by BRICS). 
This level of completeness is comparable to prior research prototypes for regex analysis \cite{davis2021using,Rathnayake2014rxxr2,Wustholz2017Rexploiter}.

Finally, we measured regex generalizability metrics~\cite{Davis2019RegexGeneralizability} in the regexes that we kept and filtered out. We found that they were similar in median: length (18 vs. 19), paths (1 vs. 1), features (3 vs. 4), and ratio of IA regexes flagged by our anti-patterns (15\% vs. 15.8\%).





\begin{table}[]
\centering
\small
\caption{
  Comparison of Precision and Recall between our IA anti-patterns and SOA anti-patterns.
}
\begin{tabular}{ccc}
\toprule
\textbf{Anti-patterns}         & \textbf{Precision} & \textbf{Recall} \\ \toprule
Our IA anti-patterns               & 100\%               & 99\%            \\
SOA anti-patterns~\cite{Davis2018EcosystemREDOS}              & 50\%               & 87\%            \\
\bottomrule
\end{tabular}
\vspace{-15px}
\label{tab:performance_comparision}
\end{table}

\subsection{Results} \label{section:Experiment1:results}

\mysubsection{How Effective were Our IA Anti-patterns Compared to the SOA Anti-patterns?}
In Table~\ref{tab:performance_comparision}, we report the results for our studied anti-pattern families.
It shows that our proposed IA anti-patterns provided a substantial improvement in both precision ($100\%$ compared to $50\%$) and recall ($99\%$ compared to $87\%$) when compared to the SOA anti-patterns.
Our IA anti-patterns addressed many of the false positives of the SOA.
For example, the \emph{Star height $>1$} anti-pattern can produce many false positives \eg the non-IA regex \verb</(b*c)*/< has \emph{Star height = 2}.
Our IA anti-patterns also reduced the number of false negatives of the SOA, \eg for regexes like \verb<(a|b)*(ab)*< and \verb<(a|b|ab)*<.
The SOA anti-patterns find no overlap between $(a|b)$ and $(ab)$ and would not label them as IA. 
In contrast, our \emph{Concat 1} and \emph{Star 2} anti-patterns, respectively, would label both as IA. 

We note that we observed higher precision and recall achieved by the SOA anti-patterns than was reported by their original study~\cite{Davis2018EcosystemREDOS}. 
We suggest two reasons: we studied a different dataset,
and we assumed full match for unanchored regexes (\eg converting \verb<a+< to \verb</^.*?a+$/<)~\cite{Davis2019LinguaFranca}), which reveals more IA regexes in the dataset.

Finally, we also performed a deeper investigation into the root cause of the false negatives of our IA anti-patterns (the 1\% of IA regexes that they did not flag as IA).
The false negatives in our experiment were mainly regexes with constructions that were too complex for our current anti-pattern scripts to detect for the limitation of Star anti-patterns discussed in~\cref{section:AntiPatterns}.
While this limitation of our implementation caused a few false negatives (affecting only 1\% of IA regexes), our implementation is still sound for our studied dataset --- it caused no false positives.

\begin{table}
\centering
\small
\caption{
  Prevalence of each of our proposed IA anti-patterns within the studied dataset.
  As some regexes fit multiple IA anti-patterns, the final row eliminates double-counting.
}
\begin{tabular}{crr}
\toprule
\textbf{IA Anti-pattern} & \textbf{\# Regexes} & \textbf{Prevalence} \\
\toprule
Concat 1                 & 17,349   & 54\%                \\
Concat 2                 & 12,419   & 39\%                \\
Concat 3                 & 414      & 1\%                 \\
Star 1                   & 192      & \textless 1\%       \\
Star 2                   & 639      & 2\%                 \\
Star 3                   & 1,133    & 4\%                 \\ \midrule
All anti-patterns        & 31,537   & 99\%                \\
\bottomrule
\end{tabular}
\vspace{-15px}
\label{tab:per_rule}
\end{table}

\mysubsection{How Prevalent was each of our IA Anti-patterns?}
Table~\ref{tab:per_rule} shows the prevalence of each of our proposed IA anti-patterns in our studied dataset, \ie the ratio of IA regexes that were detected by each IA anti-pattern.
Note that the prevalence ratios do not add up to 100\%, since some IA regexes may contain multiple anti-patterns.

We make multiple observations in this table.
First, all our IA anti-patterns as a group provided high recall
(99\%); false negatives are rare.
Second, we observed wide variations in the prevalence of each individual anti-patterns.
This means that our set of anti-patterns could be further simplified and still obtain very high recall altogether.
Somebody wanting to learn only a single anti-pattern could learn only Concat 1 and still cover 54\% of IA regexes --- adding Concat 2, one would cover the large majority ($>90\%$) of IA regexes, and so on.
This confirms past research that found that polynomial regexes were much more prevalent than exponential ones~\cite{davis2018sense}.
We believe that this does not mean that developers are already good at avoiding some anti-patterns, but instead that the kinds of problems that would require a Concat regex are more common than those that would require a Star one.
However, future work would be needed to answer this question.
Finally, before considering ignoring the less common (lower prevalence) IA anti-patterns, one should also consider their risk.
While the star anti-patterns are less common (about 6\% of all IA regexes), they are riskier --- our theory shows that they lead to exponential ambiguity.

\mysection{Summary for Experiment 1} 
Our IA anti-patterns correctly identified IA regexes with substantially higher effectiveness (100\% precision, 99\% recall) than the SOA anti-patterns (50\% precision, 87\% recall).

\section{Experiment 2: Effectiveness when Applied by Humans}
\label{sec:Experiment2}

Our IA anti-patterns can identify IA regexes with high precision and recall (\cref{section:Experiment1}), but their effectiveness may be reduced when applied manually~\cite{johnson2013don}.
Here we report on a human-subjects experiment evaluating the effectiveness of our IA anti-patterns when applied manually.

\subsection{Experimental Design} \label{sec:Experiment2-design}


\mysection{Overview}
\label{section:Experiment2:process}
We asked 20 software developers to perform 5 regex composition tasks.
To study a context in which developers may prefer to apply IA anti-patterns manually~\cite{johnson2013don}, we studied simple regex composition tasks.
We followed a within-subjects approach: each subject applied both our IA anti-patterns and the state-of-the-art (SOA) ones.
Among 20 participants,
  half (10) used our anti-patterns first,
  and the other half used the SOA ones first.
We measured whether subjects correctly identified IA in their regexes.

\mysection{Treatments}
\label{section:Experiment2:techniques}
We showed subjects our IA anti-patterns as described in~\cref{tab:our_antipatterns}
and the SOA anti-patterns using verbatim text from Davis \etal's original description of the anti-pattern, and of how it should be applied (described in~\cref{tab:davis_antipatterns}).
Note that we did not study a control group that used no anti-patterns.
Experiment 1 already answers what a control group would show: when developers are given no support, they write thousands of vulnerable regexes (\cref{section:Experiment1}).

\mysection{Tasks}
\label{section:Experiment2:tasks}
\cref{tab:experiment3_tasks} shows the 5 tasks of Experiment 2. 
Task 1 was an easy warm-up task, to familiarize subjects with the structure of the experiment.
The next three tasks (Tasks 2, 3, 4) evaluated limitations that we identified in the three SOA anti-patterns, to learn if our IA anti-patterns were more effective in those scenarios.
In task 2, \textit{Star height $>1$} may produce a false positive, assessing the regex as IA. 
In task 3, \textit{QOA} may produce a false negative, assessing the regex as non-IA. 
In task 4, \textit{QOD} may produce a false negative.
Finally, task 5 evaluated a scenario in which the SOA anti-patterns are successful, to learn if our IA anti-patterns are comparable in such a scenario.
We expected both sets of anti-patterns to perform equally in tasks 1 and 5, and our IA anti-patterns to be more effective in tasks 2, 3, and 4.

\mysection{Within-Subjects Protocol}
\label{section:Experiment2:protocol}
Our protocol had three steps:
(1) \emph{Training:} 
We shared background information in:
  (\emph{i}) regex syntax and useful terminology, so that they knew correct regex syntax;
  and
  (\emph{ii}) regex ambiguity and \REDOS, so that they understood the practical utility of the task and thus increase their engagement.
(2) \emph{First set of anti-patterns:}
We taught subjects one set of anti-patterns.
They completed the 5 regex composition tasks, producing a regex that is not IA, using the given anti-patterns.
(3) \emph{Second set of anti-patterns:} 
We taught subjects the other set of anti-patterns.
They performed the same 5 tasks, in the same order, using the other anti-patterns. 

We let subjects ask clarifying questions.
We asked them to think aloud.
The experiment took $\sim$1 hour per subject.
Subjects were compensated with a \$15 gift card.


\begin{table}[]
\centering
\small
\renewcommand{\tabcolsep}{4pt}
\caption{
    Regex composition tasks studied in Experiment 2.
    }
\begin{tabular}{cp{5.3cm}p{2cm}}
\toprule
\textbf{Task} & \textbf{Description} & \begin{tabular}{@{}c@{}}\textbf{Typical} \\ \textbf{solution}\end{tabular}  \\ \toprule
1       & Write a regex to match one or more `b' followed by a single `c'. Example matching strings: bc, bbc, bbbbc, bbbbbc, bbbbbbbbbbbbbbbbbbbbbc                                                                                                                                                & non-IA: \hspace{0.5cm} \verb<b+c<                                                                                        \\ \midrule
2       & Write a regex to match one or more repetitions of the following:  one or more `b' followed by a single `c'. Example matching strings: bcbc, bbcbbcbbc,  bbbbbcbbbbbc, bbbbbbbbbbbbbbbbbbbbbcbbbbbbbbbbbbbbbbbbbbbc & non-IA: \verb<(b+c)+<                                                                                    \\ \midrule
3       & Write a regex to match one or more `a' or `b', followed by one or more repetitions of `ab'. Example matching strings: aab, bab, aaab, aaaaab, bab, bbbab, aaaabababab, bbbbababababab                                                              & IA: \verb<(a|b)+(ab)+<                                                                              \\ \midrule
4       & Write a regex to match one or more occurrences of the strings `a', `b', or `ab'. Example matching strings: aaaaaaaaaa, bbbbbbbbbbbb, ababababababababab                                      & IA: \verb<(a|b|ab)+<                                                                               \\ \midrule
5       & Write a regex to match one or more `a' followed by an optional `b' followed by one or more `a'. Example matching strings: aaaabaa, aaaaa, abaaaa                                                                                    & IA: \verb<(a+b?a+)<                                                                                   \\ \bottomrule
\end{tabular}
\label{tab:Regex Composition Tasks}
\end{table}

\mysection{Subjects}
\label{section:Experiment2:subjects}
Subjects were recruited via posts on Twitter, Reddit (r/regex), and our institutional mailing lists.
We asked subjects to report their years of professional software development and their experience with regexes (self-reported, based on popular regex features following Michael \etal~\cite{Michael2019RegexesAreHard}).
We had 27 respondents, and kept the 21 respondents who reported some experience in both categories.
After performing the experiment, we discarded one additional subject who composed incorrect regexes for 70\% of the tasks (they did not match the example inputs provided in the specification).
Thus, in total, we analyzed the performance of 20 subjects.
We list their demographics in \cref{tab:experiment2_demographics}.


\begin{table}[t]
\centering
\caption{
Demographics of Experiment 2:
  subjects' experience with software development,
  and
  with regexes.
}
\begin{tabular}{lcccccc}
\toprule
~ & \multicolumn{3}{c}{\textbf{Years Prof. Soft. Dev.}} & \multicolumn{3}{c}{\textbf{Exp. with Regexes}} 
\\ 
~ & $<1$ & 1-2 & 3-5 & \emph{Novice} & \emph{Interm.} & \emph{Expert}
\\
\toprule
\textbf{\# Subjects} & 7 & 7 & 6 & 9 & 11 & 0  
\\
\bottomrule
\end{tabular}
\label{tab:experiment2_demographics}
\vspace{-0.1cm}
\end{table}

\mysection{Metrics}
\label{section:Experiment2:metrics}
For each studied task and anti-pattern set, we measured success using \emph{Detection Effectiveness}: the percentage of subjects that correctly identified whether their composed regex was IA.
We used the same approach for ground truth as in Experiment 1 (\cref{section:Experiment1:groundtruth}).
Note that we did not measure whether subjects fixed the IA section in their regex, if any; 
we measured the effectiveness of anti-patterns following the goal of the original SOA anti-patterns --- to identify IA.

\mysection{Statistical Tests}
\label{section:Experiment2:statistical_tests}
We validated our results using hypothesis testing and power analysis.

\mysubsection{Hypothesis testing} 
We used the null hypothesis: \emph{$H_0$: subjects using our IA anti-patterns achieve as much detection effectiveness as those using the SOA anti-patterns.}
We tested $H_0$ using a Wilcoxon signed rank test~\cite{wilcoxon1992individual} (since we cannot assume a normal distribution of results, and our observations are paired) over the IA assessments produced by our IA anti-patterns and the SOA ones for all tasks and orders.
If this test returned a low p-value ($p<0.05$), we rejected $H_0$. 



\mysubsection{Power analysis} 
We used power analysis to determine if our sample size was sufficient to support a statistically significant expected effect size in detection effectiveness~\cite{ellis2010essential}.
We looked for standard power of $0.8$, standard statistical significance of $p<0.05$, with our observed effect size (\ie the difference in detection effectiveness using our IA anti-patterns vs. using the SOA anti-patterns for all tasks and orders).

\begin{table}[t]
\centering
\small
\caption{
    Performance of subjects in Experiment 2:
    percentage of subjects correctly using each anti-pattern set to identify if their composed regex was IA.
}
\setlength\tabcolsep{4pt} 

\resizebox{\linewidth}{!}{  
\begin{tabular}{lcccccc}
\toprule
\multicolumn{1}{c}{\textbf{}}      & \multicolumn{2}{c|}{\textbf{\begin{tabular}[c]{@{}c@{}}SOA first, IA after\\  ($N=10$)\end{tabular}}}                                                                                                        & \multicolumn{2}{c|}{\textbf{\begin{tabular}[c]{@{}c@{}}IA first, SOA after \\  ($N=10$)\end{tabular}}}                                                                                                                     & \multicolumn{2}{c}{\textbf{\begin{tabular}[c]{@{}c@{}}All orders \\ ($N=20$)\end{tabular}}}                                                                                                                                                                                    \\ \cmidrule{2-7} 
\multicolumn{1}{c}{\textbf{Task}} & \multicolumn{1}{c}{\textbf{\begin{tabular}[c]{@{}c@{}}SOA\end{tabular}}} & \multicolumn{1}{c}{\textbf{\begin{tabular}[c]{@{}c@{}}IA\end{tabular}}} & \multicolumn{1}{c}{\textbf{\begin{tabular}[c]{@{}c@{}}SOA\end{tabular}}} & \multicolumn{1}{c}{\textbf{\begin{tabular}[c]{@{}c@{}}IA\end{tabular}}} & \multicolumn{1}{c}{\textbf{\begin{tabular}[c]{@{}c@{}}SOA\end{tabular}}} & \multicolumn{1}{c}{\textbf{\begin{tabular}[c]{@{}c@{}}IA\end{tabular}}} \\ \toprule
\multicolumn{1}{r}{1}             & 100\%                                                                                                                & 100\%                                                                                                 & 100\%                                                                                                                & 100\%                                                                                                 & 100\%                                                                                                                & 100\%                                                                                                \\
\multicolumn{1}{r}{2}             & 10\%                                                                                                                 & 100\%                                                                                                 & 0\%                                                                                                                  & 100\%                                                                                                 & 5\%                                                                                                                  & 100\%                                                                                                \\
\multicolumn{1}{r}{3}             & 20\%                                                                                                                 & 100\%                                                                                                 & 20\%                                                                                                                 & 100\%                                                                                                 & 20\%                                                                                                                 & 100\%                                                                                                \\
\multicolumn{1}{r}{4}             & 30\%                                                                                                                 & 100\%                                                                                                 & 20\%                                                                                                                 & 100\%                                                                                                 & 25\%                                                                                                                 & 100\%                                                                                                  \\
\multicolumn{1}{r}{5}             & 100\%                                                                                                                & 100\%                                                                                                 & 100\%                                                                                                                & 100\%                                                                                                 & 100\%                                                                                                                & 100\%                                                                                                \\ \midrule
\multicolumn{1}{r}{All}     & 52\%                                                                                                                 & 100\%                                                                                                 & 48\%                                                                                                                 & 100\%                                                                                                 & 50\%                                                                                                                 & 100\% \\
\bottomrule
\end{tabular}
}
\vspace{-0.1cm}
\label{tab:user-study}
\end{table}

\subsection{Results} \label{section:Experiment2:results}

\noindent
\cref{tab:user-study} summarizes our results
for each order of application, 
by each set of anti-patterns, 
for each task. 

\begin{itemize}
\item Considering all tasks and treatment orders,
subjects using our IA anti-patterns achieved 100\% detection effectiveness, 
improving on the SOA anti-patterns (50\%).
\item Our hypothesis test showed a statistically significant improvement ($p<.00001$).
\item Our observed effect size was 50\%, comparing the final columns in the bottom row of \cref{tab:user-study}. Our power analysis indicated that we studied a sufficient number of subjects: we needed 11 and studied 20.
\end{itemize}

\noindent
As we expected (\cref{section:Experiment2:tasks}), the SOA anti-patterns showed their limitations in tasks 2, 3, and 4 --- regardless of the ordering.
Also as expected, both sets of anti-patterns achieved 100\% detection effectiveness for tasks 1 and 5, also regardless of ordering.
We conclude that our IA anti-patterns are as effective as the SOA ones when they are not limited, and much more effective than them when they are.





\mysection{Summary for Experiment 2}
Our IA anti-patterns outperformed the SOA anti-patterns when applied manually (100\% vs. 50\% effectiveness).

\section{Experiment 3: Usability when Complementing Existing Tools}
\label{sec:Experiment3}
Experiments 1 and 2 showed that our anti-patterns
  are effective over a wide variety of regexes (\cref{section:Experiment1}),
  and
  can be applied manually by humans (\cref{sec:Experiment2}).
However, developers may prefer automatic tools, \eg for complex regexes.

In Experiment 3, we studied whether our anti-patterns and fix strategies \emph{complement} automatic tools for real-world regexes.
Our goal is not to replace existing automatic tools (we hope developers use them!), but to complement them, to increase developer understanding of the task outcome.

\subsection{Experimental Design} \label{sec:Experiment3-design}


\mysection{Overview} \label{section:Experiment3:process}
We asked 9 software developers to perform real-world \REDOS detection and fixing.
They performed tasks over their own regexes from open-source projects.
They first used only an automatic tool, and then the tool combined with our anti-patterns (for detection) and our fix strategies (for fixing).
Our design was within-subjects.
We fixed the order so we could measure their (hypothesized) increase in understanding after adding our approach.

\mysection{Treatments}  \label{sec:Experiment3:techniques}
For \emph{detection}, our subjects first used a representative detection tool (Weideman \etal's~\cite{Weideman2017StaticExpressions}), and then complemented it with our anti-patterns.
We studied Weideman \etal's approach because they provide a mature implementation with many stars in GitHub.
For \emph{fixing}, our subjects first used a representative fixing approach (van der Merwe \etal's~\cite{VanDerMerwe2017EvilRegexesHarmless}), and then complemented it with our anti-patterns and fix strategies.
We studied van der Merwe \etal's approach because it is the only existing fixing approach that does not modify the language accepted by the regex.\footnote{The algorithm of van der Merwe \etal does not include an open-source implementation. Our implementation is included in our artifact.}

To help these tools to perform their best, we trained our subjects.
For \emph{detection}, we trained them on the purpose and workings of Weideman's detection tool, including the NFA-based characterizations of IA that it detects in the regex (see \cref{fig:ida}).
For \emph{fixing}, we trained them on the purpose and workings of Van Der Merwe's algorithm, \ie that it converts the regex's NFA to an equivalent unambiguous DFA, then back to an equivalent regex.
We also explained our anti-patterns (\cref{tab:our_antipatterns}) and fix strategies (\cref{tab:fix_strategies}). 


\mysection{Tasks} \label{section:Experiment3:tasks}
\label{sec:Experiment3:tasks}
\cref{tab:experiment3_tasks} shows the tasks of Experiment 3.

\mysubsection{Detection Task} 
We asked our subjects to detect vulnerability in 3 regexes,
in a random order: a PDA regex, an EDA regex, and a non-IA regex (see \cref{fig:ida}).
One of the PDA or EDA regexes was the vulnerable one that we took from the subject's software project.
For the other two regexes, we used the same randomly chosen regexes from the dataset studied in Experiment 1 (\cref{section:Experiment1:dataset}).
For each regex, we showed our subjects the output of Weideman's detection tool, and asked them how strongly they understood the vulnerability in the regex.
Then, we also showed them our anti-patterns, asked them to identify the anti-pattern(s) that each regex fits (to prompt them to use the anti-patterns), and asked them the same question again.

\mysubsection{Fixing Task} 
We asked our subjects to fix the vulnerability in the regex that we took from their code.
We showed them their regex in the context of their project, and asked them how strongly they understand it (to refresh their memory).
Then, we showed them the output of van der Merwe's approach: a non-IA version of their regex.
We let them write their own fix or take/adapt van der Merwe's, and we asked them our understanding questions.
Then, we also showed them our anti-patterns and fix strategies, again let them modify their fix if they choose to, and again asked them our understanding questions.

\begin{table}[t]
    \centering
\caption{
    Tasks of Experiment 3.
    Italics denote changing text with each subject.
    Brackets denote subject answers.
}
\begin{tabular}{p{0.13\linewidth}p{0.23\linewidth}p{0.33\linewidth}p{0.1\linewidth} }
        \multicolumn{4}{c}{{\normalsize\textbf{Detection Task}}} \\ \toprule
        ~ & \textbf{Output of automatic detection tool} & \textbf{How strongly do you understand what makes this regex vulnerable?} & \textbf{Explain your reasoning} \\ \midrule
        \emph{PDA regex} & \emph{Output of Weideman's detection tool} \cite{Weideman2017StaticExpressions} & [Very strongly, Strongly, Neutral, Weakly, Very weakly, Not Vulnerable] & [\dots] \\ \midrule
        \emph{EDA regex} & \emph{Output of Weideman's detection tool} \cite{Weideman2017StaticExpressions} & [Very strongly, Strongly, Neutral, Weakly, Very weakly, Not Vulnerable] & [\dots] \\ \midrule
        \emph{Non-IA regex} & \emph{Output of Weideman's detection tool} \cite{Weideman2017StaticExpressions} & [Very strongly, Strongly, Neutral, Weakly, Very weakly, Not Vulnerable] & [\dots] \\ \bottomrule
        \multicolumn{4}{c}{~} \\ 
        \multicolumn{4}{c}{{\normalsize\textbf{Fixing Task}}} \\ \toprule
        ~ & \textbf{Output of automatic fixing tool} & \textbf{How strongly do you understand what makes the resulting fixed regex not vulnerable?} & \textbf{Explain your reasoning} \\ \midrule
        \emph{Their vulnerable regex in context} & \emph{Output of van der Merwe's fixing tool} \cite{VanDerMerwe2017EvilRegexesHarmless} & [Very strongly, Strongly, Neutral, Weakly, Very weakly, Not Vulnerable] & [\dots] \\
        \bottomrule
    \end{tabular}
\label{tab:experiment3_tasks}
\end{table}

\mysection{Within-Subjects Protocol} \label{section:Experiment3:protocol}
Our protocol had three steps:
(1) \emph{Training:}
  This training was more in-depth than Experiment 2 because the subjects had greater expertise.
  We taught the technical details of \REDOS attacks (following~\cite{Davis2018EcosystemREDOS}), and showed the participants how the detection~\cite{Weideman2017StaticExpressions} and fixing~\cite{VanDerMerwe2017EvilRegexesHarmless} tools work.
(2) \emph{Detection:} We asked subjects to detect IA in a set of regexes, first using only existing automatic tools, and then combining them with our anti-patterns.
(3) \emph{Fixing:} We asked subjects to fix an IA regex that they wrote in their codebase to make it non-IA, first using only existing automatic tools, and then combining them with our anti-patterns and fix strategies.

We let subjects ask clarifying questions.
We asked them to think aloud.
To simulate real-world conditions,
we let them use external resources,
and showed them some resources:
 a web interface for the studied tools,
 and
 two websites for regex understanding (\url{www.regex101.com} and \url{www.regexper.com}).
The experiment took $\sim$1 hour per subject.
Subjects were compensated with a \$40 gift card. 

\mysection{Subjects}
\label{section:Experiment3:subjects}
We recruited software developers
that had written a vulnerable regular expression in 
the PyPi~\cite{pypiHomePage} and NPM~\cite{NPMHomePage} software ecosystem.
To increase response rate, we scanned  $\sim$200K PyPi projects and  $\sim$40K NPM projects with some popularity (at least 1 star) and with recent activity (at least one commit since January 2020).
We 
  extracted their regexes;
  discarded any in test files or dependencies;
  and
  identified vulnerable regexes using Davis \etals ensemble of ReDoS detectors~\cite{Davis2019LinguaFranca},
This process resulted in $120$ vulnerable regexes ($99$ from PyPI, $21$ from NPM). 
We
disclosed these potential vulnerabilities to the $120$ software developers who last modified them.
We invited those developers to participate in our experiment. 
9 of them agreed (8\% response rate) --- demographics are in~\cref{tab:experiment3_demographics}.

\begin{table}
\centering
\caption{
Demographics of Experiment 3:
  subjects' experience with software development,
  and
  with regexes.
}
\begin{tabular}{lccccc}
\toprule
~ & \multicolumn{3}{c}{\textbf{Years Prof. Soft. Dev.}} & \multicolumn{2}{c}{\textbf{Exp. with Regexes}} 
\\ 
~ & 3-5 & 6-10 & $>10$ & \emph{Interm.} & \emph{Expert}
\\ \toprule
\textbf{\# Subjects} & 1 & 1 & 7 & 1 & 8  
\\ \bottomrule
\end{tabular}
\label{tab:experiment3_demographics}
\vspace{-.2in}
\end{table}


\mysection{Metrics}
\label{section:Experiment3:metrics}
We measured the success of our anti-patterns or fix strategies as the increase in understanding that our subjects reported after applying them.
We asked our subjects the same question twice, once after applying each treatment.
We also asked them to explain their reasoning (see \cref{tab:experiment3_tasks}).

For \emph{detection}, we asked them how strongly they understood what makes the regex vulnerable, using 
a Likert scale of: \Likert{Very strongly}, \Likert{Strongly}, \Likert{Neutral}, \Likert{Weakly}, and \Likert{Very weakly} understand, and \Likert{Not vulnerable}.
For \emph{fixing}, we asked them how strongly they understood what makes the resulting fixed regex not vulnerable, using the same scale.
Finally, we asked them how helpful they found the anti-patterns or fix strategies for their future overall.

\mysection{Statistical Tests}
\label{section:Experiment3:statistical_tests}
We validated our results using hypothesis testing and power analysis (as in \cref{section:Experiment2:statistical_tests}).

\mysubsection{Hypothesis testing} 
We set two null hypotheses.
For \underline{detection}, 
\emph{$H_0$: Subjects using our IA anti-patterns in combination with existing tools report the same understanding strength of what makes the regex vulnerable as those using existing tools only.}
For \underline{fixing}, 
\emph{$H_0$: Subjects using our fix strategies in combination with existing tools report the same understanding strength of what makes the fixed regex not vulnerable as those using existing tools only.}
We tested the null hypothesis for each task using a Wilcoxon signed rank test~\cite{wilcoxon1992individual} (since we cannot assume a normal distribution of results, and our observations are paired) over the reported understanding scores for each treatment.

\mysubsection{Power analysis} 
We 
again looked for standard power of $0.8$, standard statistical significance of $p<0.05$, and
measured effect size as the increase in mean reported understanding for each task.

\subsection{Results} \label{section:Experiment3:resulst}

\vspace{-5px}
\begin{figure}[] 
	\centering
    \includegraphics[width=1\columnwidth]{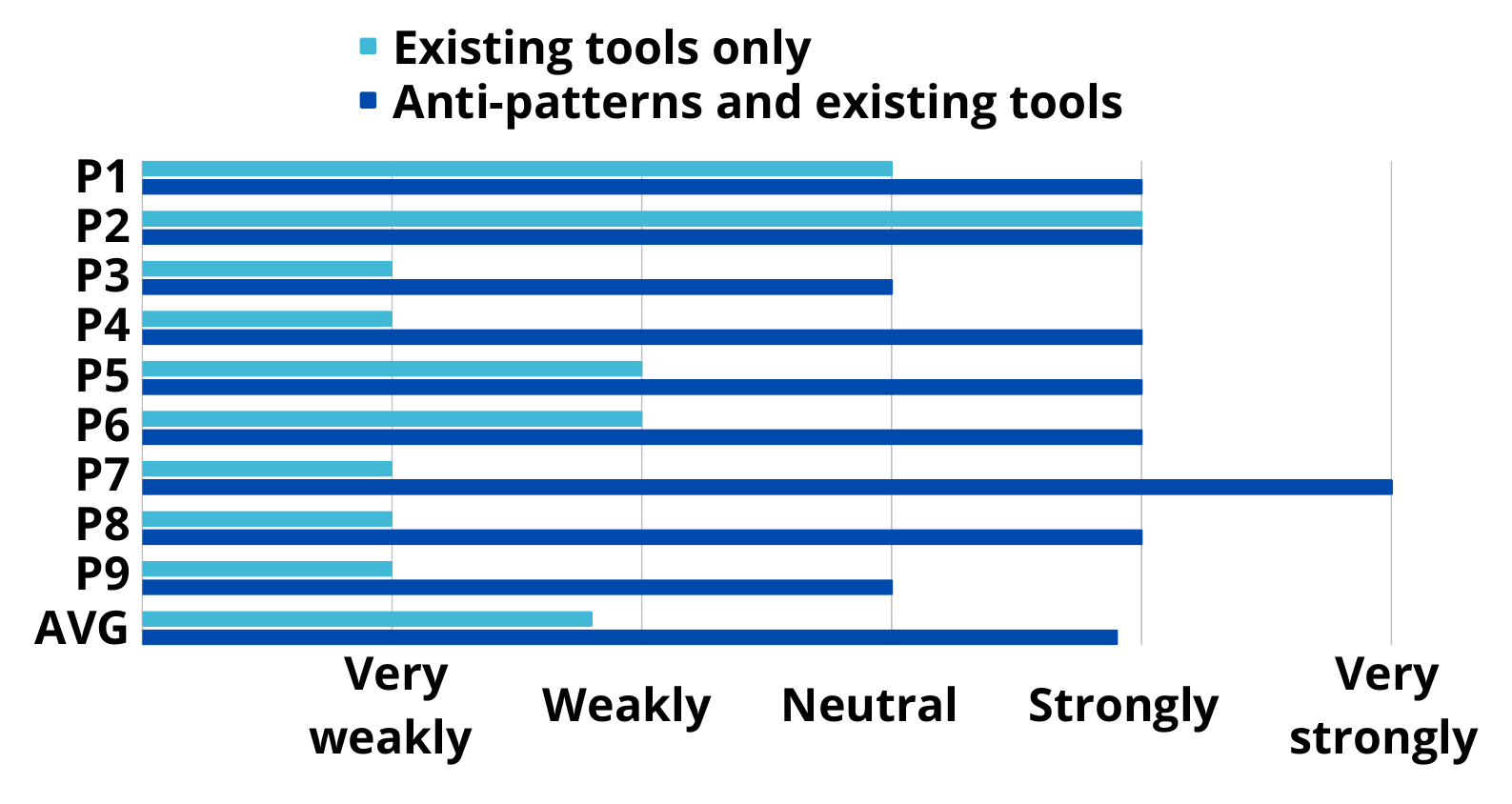}
    \captionof{figure}{
        Detection Task: Subjects consistently reported stronger understanding of what makes their regex vulnerable when using our anti-patterns to complement existing tools.
    }
    \label{fig:result_task 2_exp3}
    \vspace{-15px}
\end{figure}


\mysection{Detection Task}
We focus on how strongly subjects understood what makes their own regex vulnerable (see \cref{fig:result_task 2_exp3}).
\begin{itemize}
	\item Subjects using our anti-patterns to complement existing tools reported median \Likert{Strongly} understanding the vulnerability, improving over using existing tools only (median \Likert{Very weakly}).
	\item Our hypothesis test showed a statistically significant improvement ($p<0.05$).
	\item Our observed effect size was mean 2.1 Likert points --- bottom bar in \cref{fig:result_task 2_exp3}.
	Our power analysis indicated that we studied a sufficient number of subjects: we needed 4 and studied 9.
\end{itemize}

\noindent
Subjects also reported that the anti-patterns will be \Likert{Helpful} ($N=4$) or \Likert{Very helpful} ($N=5$) for their future detection efforts.
The following quote describes their most common sentiment:
\myinlinequote{I will use the tool to see if there is something wrong, and with the anti-patterns I can try to understand why there is a problem}.




\begin{figure}[h] 
	\centering
    \includegraphics[width=1\columnwidth]{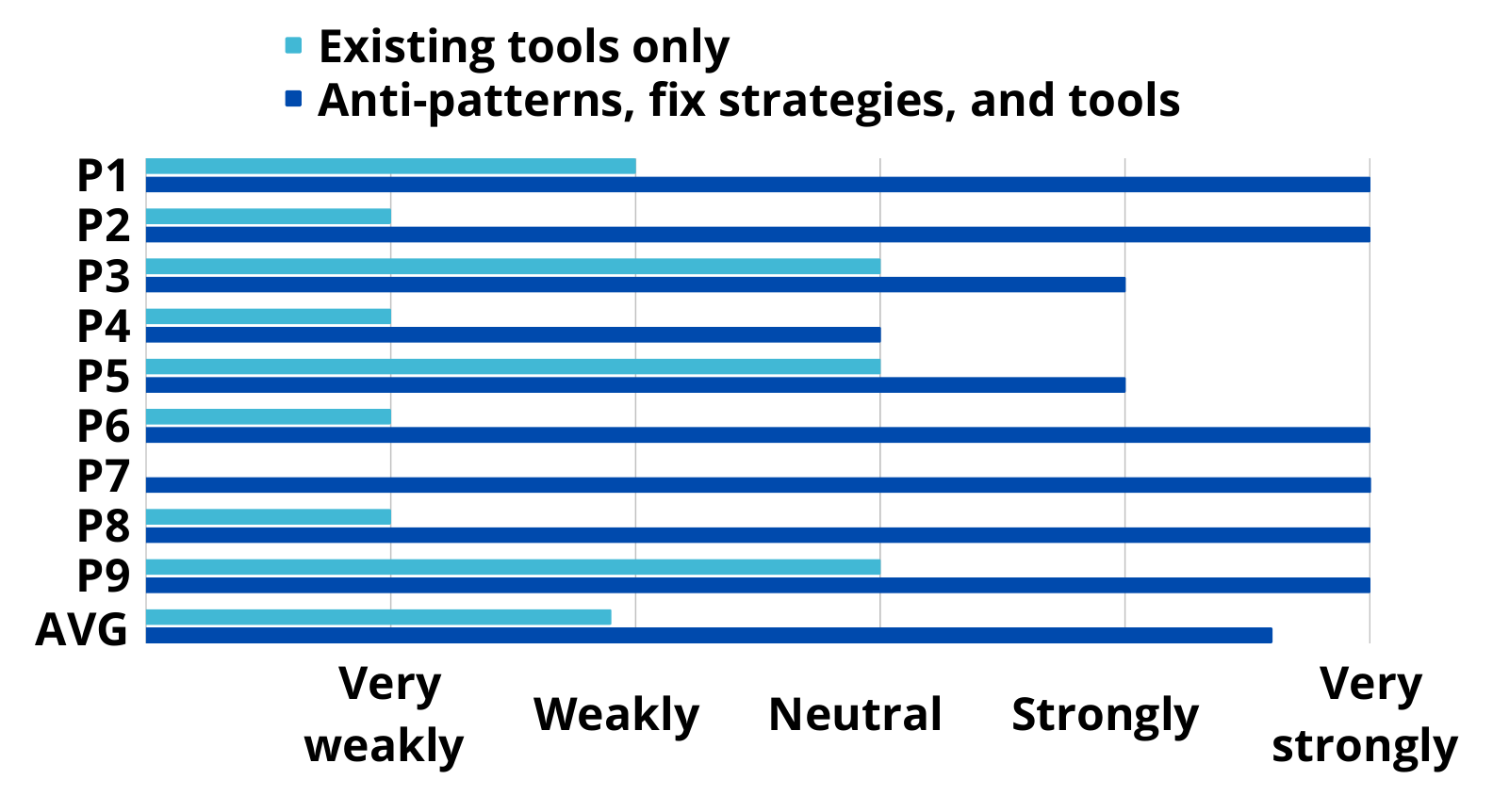}
    \captionof{figure}{
        Fixing Task: Subjects consistently reported stronger understanding of what makes their resulting fixed regex not vulnerable when they used our anti-patterns and fix strategies to complement existing tools.
    }
    \label{fig:result_repair_exp3}
\end{figure}

\mysection{Fixing Task}
\cref{fig:result_repair_exp3} shows our subjects' reported understanding of what makes the fixed regex not vulnerable. 
\begin{itemize}
	\item Subjects using our anti-patterns and fix strategies to complement existing tools reported median \Likert{Very strongly} understanding, improving over using existing tools only (median \Likert{Very weakly}).
	\item Our hypothesis test showed a statistically significant improvement ($p<0.05$).
	\item Our observed effect size was mean 2.9 Likert points --- bottom bar in \cref{fig:result_repair_exp3}.
	Our power analysis indicated that we studied a sufficient number of subjects: we needed 2 and studied 9.
\end{itemize}

\noindent
Subjects also reported that the anti-patterns and fix strategies will be \Likert{Neutral} ($N=1$), \Likert{Helpful} ($N=2$), or \Likert{Very helpful} ($N=6$) for their future detection efforts.
As an example quote, one subject regarded the fixing tool as:
    \myinlinequote{The output does not make a whole lot of sense to me}.
Another said of the fix resulting from our fix strategies:
    \myinlinequote{I understand why this is ambiguous and how the change fixes it}.
Finally, almost all subjects ($N=8$) were more comfortable fixing their codebase with the fix produced using our fix strategies than with the one produced by the existing tool (\cref{fig:result_repair2_exp3} in Appendix).


\mysection{Summary for Experiment 3}
Subjects using our IA anti-patterns and fix strategies to complement existing tools reported much higher understanding, from median \Likert{Very weakly} to median \Likert{Strongly} for detection, and to median \Likert{Very strongly} for fixing.

\section{Threats to Validity} \label{section:threats}

\mysection{Internal Validity}
We took multiple measures to increase internal validity.
In Experiment 1 (\cref{section:Experiment1}), 
we tested the implementation scripts of our anti-patterns over small samples of the dataset.
We also curated our studied dataset to prepare it for our experiments.
We also used existing implementations of tools where possible, viz. the SOA anti-patterns by Davis \etal~\cite{Davis2018EcosystemREDOS} and Weidemann \etal's detector~\cite{Weideman2017StaticExpressions}, to avoid errors if implementing them ourselves.

In Experiment 2 (\cref{sec:Experiment2}),
we used best practices in human-subject experiment methodologies in its design, \eg~\cite{kitchenham2008personal,siegmund2014measuring}.
We 
\emph{piloted} the protocol on
3 pilot studies, which helped us clarify
the language describing the tasks (pilot 1) and the technical terms in the training (pilot 2).
Pilot 3 showed that our script was adequate.
To reduce \emph{social desirability bias}, we did not disclose who created any of the anti-patterns, and we referred to them in the third person (Anti-patterns 1 and 2).
To avoid \emph{expertise bias}, we asked subjects to apply the given anti-patterns irrespective of their perception of their correctness.
To avoid \emph{learning bias}, 10 random subjects used the SOA anti-patterns first, and the other 10 used our IA anti-patterns first.

In Experiment 3 (\cref{sec:Experiment3})
we likewise used best practices in design.
We \emph{piloted} the protocol on 3 subjects.
After pilot 1, we adjusted the number of tasks to reduce the experiment duration.
After pilots 2 and 3, we clarified some terms in the training.
To reduce \emph{social desirability bias}, we did not disclose who created any of the tools or supplementary approaches (even if we were asked); we referred to them only as ``Treatment 1'' etc.
Further, we separately and independently asked subjects their understanding using one treatment and then using the other (as opposed to asking them to compare treatments).
Having subjects individually decide and assign a specific score to each treatment reduces the possibility of them unconsciously preferring the last treatment.
To reduce \emph{expertise bias} (\eg higher understanding reported by more experienced subjects), all subjects used both treatments.
To reduce \emph{learning bias}, subjects used first the treatment that we anticipated would provide lower understanding, \ie the existing tools.
If subjects used first the treatment that truly provided higher understanding (and second the one that truly produced lower), they would misleadingly report higher understanding for the second treatment; they cannot forget what they learned.
Having the combination of our anti-patterns, fix strategies, and existing tools as the last treatment may have unconsciously nudged subjects to report higher understanding for them.
However, subjects reported much higher understanding for this last treatment with statistical significance, \ie more likely due to a real effect than to chance.

\mysection{External Validity}
We also took multiple measures to increase external validity.

In Experiment 1 (\cref{section:Experiment1}), 
we evaluated the largest available dataset of regexes~\cite{Davis2019LinguaFranca}.

In Experiment 2 (\cref{sec:Experiment2}),
our subjects had diverse levels of professional software development experience.
Their experience with regexes was at the novice and intermediate levels, but we studied regex experts in Experiment 3.
We studied simple composition tasks to represent situations when developers may choose to apply anti-patterns manually, but they represented diverse scenarios.
Furthermore, our subjects composed solutions that were only slightly less complex than typical real-world regexes according to \cite{Davis2019RegexGeneralizability}. 
For example, they had length 6-11 (median regex length in Java: 15) and used 2-3 operators (Java: median 3).
We report the most common solution observed for each task in \cref{tab:Regex Composition Tasks}.
We also studied more complex real-world regexes in Experiment 3.

In Experiment 3 (\cref{sec:Experiment3}),
we studied mostly regex experts and complex real-world regexes to complement Experiment 2.
We also made this experiment as realistic as possible by having developers work with their own regexes, and giving them free access to online resources.

Finally, both Experiment 2 and 3 studied a limited number of subjects ($N=20$ and $N=9$).
However, we observed large effect sizes in our results, we did so consistently, they were statistically significant, and power analysis revealed that fewer subjects would have been sufficient.

\section{Conclusions} \label{section:conclusions}

To secure software systems, developers need approaches that are both sound and understandable.
Prior to this paper, the approaches to address regular expression security problems provided theoretical guarantees, but were difficult for developers to understand.
Our goal was to complement these existing approaches with understandable regex security anti-patterns and fix strategies.
To that end, we developed a novel theory of regex infinite ambiguity that characterizes vulnerable regexes to \REDOS, and a set of anti-patterns and fix strategies derived from it.
Our evaluation showed that our IA anti-patterns identified vulnerable regexes with much higher effectiveness than the state-of-the-art anti-patterns, both when applied automatically and manually.
Our anti-patterns and fix strategies also substantially increased developer understanding when used alongside existing tools to detect and fix vulnerable regexes.
In the future, we plan to apply this methodology to similar security problems in domain-specific languages (\eg in GraphQL~\cite{cha2020principled}).

\vspace{-0.1cm}
\section*{Research Ethics}
\vspace{-0.1cm}

Our human subjects experiments were overseen by the appropriate Institutional Review Board (IRB).

\vspace{-0.1cm}
\section*{Acknowledgments}
\vspace{-0.1cm}

We thank the reviewers for their constructive feedback.
We thank Charles M. Sale for developing the \url{www.regextools.io} platform for our experiment.
Lee and Davis acknowledge support from NSF award \#2135156,
and Servant from URJC award C01INVESDIST.

\raggedbottom
\pagebreak

\begin{appendices}

\section{Replication Package}
\label{section:Appendix-replication}

We have made our data and code publicly available for replication~\cite{hassan2022}.

It contains, for Experiment 1:
\emph{i)} the dataset used (\cref{section:Experiment1:dataset}), 
\emph{ii)} our implementation of Weideman's detection tool~\cite{Weideman2017StaticExpressions} used as ground truth (\cref{section:Experiment1:groundtruth}), and
\emph{iii)} the implementation of our anti-patterns (\cref{section:Experiment1:techniques}).
For Experiments 2 and 3:
\emph{iv)} the full protocol used (\cref{section:Experiment2:protocol} and \cref{section:Experiment3:protocol}), and
\emph{v)} our implementation of van der Merwe's fixing tool~\cite{VanDerMerwe2017EvilRegexesHarmless} (\cref{sec:Experiment3:techniques}).
Finally,
\emph{vi)} the analysis of prevalence of our fix strategies (\cref{sec:fix_strategies:popularity}).

\section{Proofs of the Theorems} \label{section:Appendix-Proofs}

\subsection{Definitions}
We define the operators used in Theorems 2 and 3:

\subsubsection{$\xblnot$}
Brabrand \& Thomsen~\cite{brabrand2010typed} introduced an overlap operator, $\xblnot$, between two languages $L(R_1)$ and $L(R_2)$.
The set $L(R_1) \ \xblnot \ L(R_2)$ contains the ambiguity-inducing strings that can be 
parsed in multiple ways across $L(R_1)$ and $L(R_2)$.
More formally, with $X = L(R_1)$ and $Y = L(R_2)$,
\[
X \ \xblnot \ Y = \{ xay \ | \ x,y \in {\Sigma}^{*} \wedge a \in {\Sigma}^{+} s.t. \ x,xa \in X \wedge ay,y \in Y \}
\]

\subsubsection{$\Omega$}

Brabrand \& Thomsen use M{\o}ller's BRICS library~\cite{moller2010dk} for the implementation of their theorems, and actually use what we call the ``M{\o}ller overlap operator'', $\Omega$.
We use this operator in our theorems.
The M{\o}ller overlap operator describes only the ambiguous core ``$a$'':

\[
X \ \Omega \ Y = \{ \exists \ x, y \in {\Sigma}^{*} \wedge a \ | \ a \in {\Sigma}^{+}  s.t. \ x, xa \in X \wedge ay, y \in Y \}
\]

\subsection{Assumptions}
In our theorems and proofs, we assume that we can convert regexes to their equivalent, ambiguity-preserving, $\epsilon$-free NFAs~\cite{weber1991degree,Weideman2017StaticExpressions}.
\subsection{Theorems \& proofs}
Brabrand \& Thomsen's~\cref{thm:Brabrand}~\cite{brabrand2010typed} provides the conditions for \emph{unambiguity}.
Our proofs consider the effect of negating the unambiguity condition, and distinguish the conditions that lead to finite or infinite ambiguity.

\subsubsection{Theorem 0: Brabrand \& Thomsen's~\cite{brabrand2010typed} Theorem}
\textit{Given unambiguous regexes $R_1$ and $R_2$,}
    \begin{enumerate}[label=(\alph*)]
        \item $R_1|R_2$ is unambiguous iff $L(R_1) \cap L(R_2) = \phi$.
        \item $R_1{\cdot}R_2$ is unambiguous iff $L(R_1) \ \xblnot \ L(R_2) = \phi$.
        \item $R_{1}*$ is unambiguous iff $\epsilon \notin L(R_1) \wedge L(R_1) \ \xblnot \ L(R_{1}*) = \phi$.
    \end{enumerate}
\subsubsection{Theorem 1: Ambiguity of Alternation}
\textit{Given unambiguous regexes $R_1$ and $R_2$,}
    \begin{enumerate}[label=(\alph*)]
        \item \textit{$R_1|R_2$ is finitely ambiguous iff $L(R_1) \cap L(R_2) \neq \phi$}.
        \item \textit{$R_1|R_2$ cannot be infinitely ambiguous.}
    \end{enumerate}

The components of~\cref{thm:alternation} follow from~\cref{lemma:alternation:A:FA}.

\begin{lemma}
\label{lemma:alternation:A:FA}
Given unambiguous $R_1$ and $R_2$, if $R_1|R_2$ is ambiguous it is always finitely ambiguous.
\end{lemma}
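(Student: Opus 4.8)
The plan is to bound the degree of ambiguity of $R_1|R_2$ by a constant that is independent of the input length, which by definition (see \cref{sec:definition-ia-regex}) shows that whatever ambiguity is present must be finite. Recall that the degree of ambiguity on an input $w$ is the number of distinct accepting paths for $w$ in the equivalent $\epsilon$-free, ambiguity-preserving NFA, and that a regex is \emph{finitely} ambiguous exactly when this count is bounded by a constant over all inputs, whereas it is \emph{infinitely} ambiguous when the count grows unboundedly in $|w|$.

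First I would fix the ambiguity-preserving $\epsilon$-free NFA $A$ for $R_1|R_2$ and observe that it decomposes into the sub-automata $A_1$ and $A_2$ for $R_1$ and $R_2$: every accepting path of $A$ lies wholly inside $A_1$ or wholly inside $A_2$, since the alternation construction offers an initial nondeterministic choice between the two branches and never transfers control between them afterward. Consequently the set of accepting paths of $A$ on any input $w$ is exactly the disjoint union of the accepting paths of $A_1$ on $w$ and those of $A_2$ on $w$.

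Next I would invoke the hypothesis that $R_1$ and $R_2$ are unambiguous, so that each of $A_1$ and $A_2$ admits at most one accepting path for any input. Combined with the decomposition above, the number of accepting paths of $A$ on $w$ is at most $1+1=2$, independently of $|w|$. Hence $R_1|R_2$ has degree of ambiguity at most $2$ on every input; this constant bound precludes the unbounded growth required for an $\mathit{IDA}$ region of either type (PDA or EDA), so $R_1|R_2$ cannot be infinitely ambiguous. When $L(R_1) \cap L(R_2) \neq \phi$ (the condition of \cref{thm:Brabrand}(a)), the shared strings realize exactly two paths, witnessing finite ambiguity; otherwise $R_1|R_2$ is unambiguous.

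The main obstacle is justifying the clean decomposition in the second step, namely that the $\epsilon$-free, ambiguity-preserving NFA construction for alternation introduces no accepting path that interleaves transitions of $A_1$ and $A_2$. I would discharge this by appeal to the ambiguity-preserving conversion cited in the preliminaries~\cite{weber1991degree,Weideman2017StaticExpressions}: because this conversion preserves the number of parse trees and alternation corresponds to a disjoint branch choice, the per-branch accepting-path count equals the parse-tree count of the corresponding unambiguous sub-regex, which is at most one. The bound of $2$ therefore transfers directly to the degree of ambiguity of $R_1|R_2$, completing the argument.
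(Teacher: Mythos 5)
Your proof is correct and takes essentially the same approach as the paper's: both bound the degree of ambiguity of $R_1|R_2$ by the constant $2$, using the fact that each unambiguous branch contributes at most one match, so the count cannot grow with input length. The paper states this directly at the parse level (a string is matched by $R_1$, by $R_2$, by both, or by neither, giving at most two ways), and your NFA-path decomposition with the disjoint-branch observation is a more formal rendering of the same counting argument.
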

\begin{proof}
A string $s$ may be matched against $R_1|R_2$ in four ways:
  $s$ may be matched by $R_1$, by $R_2$, by both, or by neither.
In any case, since $R_1$ and $R_2$ are unambiguous, there are at most two ways for $R_1|R_2$ to match $s$.

\end{proof}

\subsubsection{Theorem 2: Ambiguity of Concatenation}

\textit{Suppose unambiguous regexes $R_1$ and $R_2$,
    and that
      $L(R_1) \ \xblnot \ L(R_2) \neq \phi$
    (so $R_1{\cdot}R_2$ is ambiguous by~\cref{thm:Brabrand}).
    Then:}
    \begin{enumerate}[label=(\alph*)]
        \item $R_1{\cdot}R_2$ is infinitely ambiguous iff 
        $L(R_1)$ contains the language of a regex \verb<BC*D< and 
        $L(R_2)$ contains the language of a regex \verb<EF*G<, 
        where $\epsilon \notin L($\verb<C<$) \wedge \epsilon \notin L($\verb<F<$) \wedge L($\verb<C<$) \cap L($\verb<F<$) \cap L($\verb<DE<$) \neq \phi$.
        \item Otherwise, $R_1{\cdot}R_2$ must be finitely ambiguous.
    \end{enumerate}

2(a) is an iff so we need to prove:

$\impliedby$: 
If 
$L(R_1)$ contains the language of a regex \verb<BC*D< and 
$L(R_2)$ contains the language of a regex \verb<EF*G<, 
where 
$\epsilon \notin L($\verb<C<$) \wedge 
\epsilon \notin L($\verb<F<$) \wedge
L($\verb<C<$) \cap L($\verb<F<$) \cap L($\verb<DE<$) \neq \phi$,
then
$R_1\cdot R_2$ is infinitely ambiguous.

\begin{proof}

Consider a string $q=bc^md \in L($\verb<BC*D<$)$ where $b,d \in \Sigma^\ast$, $c \in \Sigma^{+}$, $b \in L($\verb<B<$)$, $c \in L($\verb<C<$)$, and $d \in L($\verb<D<$)$.
By hypothesis, $L($\verb<BC*D<$) \subseteq L(R_1)$, so $q \in L(R_1)$.
Similarly, consider another string $r=ef^ng \in L($\verb<EF*G<$)$ where $e,g \in \Sigma^\ast$, $f \in \Sigma^{+}$, $e \in L($\verb<E<$)$, $f \in L($\verb<F<$)$, and $g \in L($\verb<G<$)$.
By hypothesis, $L($\verb<EF*G<$) \subseteq L(R_2)$, so $r \in L(R_2)$.
As $L($\verb<C<$) \cap L($\verb<F<$) \cap L($\verb<DE<$) \neq \phi$, suppose $c=f=de$.

Consider the new string $p = qr = bc^mdef^ng \in L(R_1){\cdot}L(R_2) = L(R_1{\cdot}R_2)$.
In other words,
$R_1{\cdot}R_2$ should include the following NFA accepting $p$.

\begin{figure}[ht] 
\centering 
\begin{minipage}[b]{\columnwidth}
  \centering
  \begin{tikzpicture}[baseline=-1.80em,initial text=]
    \tikzstyle{every state}=[inner sep=1pt, minimum size=0.2cm]
    
    \node[state, initial] (q0) {$...$};
    \node[state, right = 0.4cm of q0] (q1) {$v_1$}; 
    \node[state, right = 0.4cm of q1] (q2) {$v_2$}; 
    \node[state, right = 0.4cm of q2] (q3) {$v_3$}; 
    \node[state, right = 0.4cm of q3] (q4) {$...$}; 

    \draw [->] (q0) edge node [above] {\tt $b$} (q1);
    \draw [->] (q1) edge node [above] {\tt $d$} (q2);
    \draw [->] (q2) edge node [above] {\tt $e$} (q3);
    \draw [->] (q3) edge node [above] {\tt $g$} (q4);
    
    \draw (q1) edge[loop above] node [left] {\tt $c$} (q1);
    \draw (q3) edge[loop above] node [left] {\tt $f$} (q3);
    
  \end{tikzpicture} 
\end{minipage}
\end{figure}

For $m=2$ and $n=2$, $p=bccdeffg$. 
There are $(m\times n)+1 =(2\times2)+1 = 5$ ways to match. Ignoring prefix $b$ and suffix $g$, the five cases to match the middle $ccdeff$ are:
\begin{itemize}

\item
$v_1 \rightarrow^c v_1 \rightarrow^c v_1 \rightarrow^{(de=c)} v_1 \rightarrow^{(f=c)} v_1 \rightarrow^{(f=de)} v_3$ 

\item
$v_1 \rightarrow^c v_1 \rightarrow^c v_1 \rightarrow^{(de=c)} v_1 \rightarrow^{(f=de)} v_3 \rightarrow^f v_3$ 

\item
$v_1 \rightarrow^c v_1 \rightarrow^c v_1 \rightarrow^d v_2 \rightarrow^e v_3 \rightarrow^f v_3 \rightarrow^f v_3$ 

\item
$v_1 \rightarrow^c v_1 \rightarrow^{(c=de)} v_3 \rightarrow^{(de=f)} v_3 \rightarrow^f v_3 \rightarrow^f v_3$ 

\item
$v_1 \rightarrow^{(c=de)} v_3 \rightarrow^{c=f} v_3 \rightarrow^{(de=f)} v_3 \rightarrow^f v_3 \rightarrow^f v_3$ 

\end{itemize}
\noindent
where the superscript of an arrow represents the (input observed = path taken) pair.


The degree of ambiguity grows for each larger $m$ and $n$.
It can be shown that for an input string $p = bc^mdef^ng$, there will be $(m\times n)+1$ ways to match. 
Here ambiguity is a function of the input length.
Therefore, $R_1{\cdot}R_2$ is infinitely ambiguous. 


\end{proof}

$\implies$: 
If $R_1\cdot R_2$ is infinitely ambiguous, then
$L(R_1)$ contains the language of a regex \verb<BC*D< and 
$L(R_2)$ contains the language of a regex \verb<EF*G<, 
where $\epsilon \notin L($\verb<C<$) \wedge \epsilon \notin L($\verb<F<$) \wedge L($\verb<C<$) \cap L($\verb<F<$) \cap L($\verb<DE<$) \neq \phi$.
\begin{proof}

We will reason over an equivalent, ambiguity-preserving, $\epsilon$-free NFA~\cite{weber1991degree}.
The NFA of an infinitely ambiguous regex should include either a Polynomial or an Exponential Degree of Ambiguity (PDA, EDA) section~\cite{weber1991degree}, as shown in~\cref{fig:ida}.

We first show that 
if $R_1\cdot R_2$ is infinitely ambiguous, then
the NFA of $R_1\cdot R_2$ must contain a PDA (\cref{fig:ida}(a)).
$R_1$ and $R_2$ are unambiguous, so none of them should have a full EDA.
Concatenating two regexes $R_1\cdot R_2$ cannot create a new self loop of EDA.
Thus, $R_1\cdot R_2$ must contain a PDA.

Consider the two nodes $p$ with the loop $\pi_1$ and $q$ with the loop $\pi_3$ in \cref{fig:ida}(a).
As $R_1$ and $R_2$ are unambiguous, neither $R_1$ nor $R_2$ can include both nodes $p$ and $q$ --- because then they would be infinitely ambiguous (not unambiguous).
Therefore, $R_1$ and $R_2$ each should have a part of PDA; and the partition will appear somewhere along the path $\pi_2$ as the loops $\pi_1$ and $\pi_3$ cannot be newly introduced via concatenation.

Each partition of PDA consists of a prefix, a loop, and a suffix,
which can be mapped to a regex of the form \verb>PQ*R>. 
As a PDA is a part of the whole NFA, more generally, we can conclude that
(1) $L(R_1)$ contains the language of a regex \verb<BC*D< and 
(2) $L(R_2)$ contains the language of a regex \verb<EF*G<:
\ie $L($\verb<BC*D<$) \subseteq L(R_1)$ and $L($\verb<EF*G<$) \subseteq L(R_2)$ where $\epsilon \notin L($\verb<C<$) \wedge \epsilon \notin L($\verb<F<$)$.


After concatenation, the full PDA can be represented by a language of the form \verb<BC*DEF*G<,
where \verb<C*< is mapped to the first loop $\pi_1$, \verb<DE< to the path $\pi_2$, and \verb<F*< to the second loop $\pi_3$.
Let $s$ be the string that meets the PDA path conditions: 
$label(\pi_1)=label(\pi_2)=label(\pi_3)$.
Then, $s \in L($\verb<C<$)$ (by $label(\pi_1$)), $s \in L($\verb<DE<$)$, and $s \in L($\verb<F<$)$.
And thus 
$L($\verb<C<$) \cap L($\verb<F<$) \cap L($\verb<DE<$) \neq \phi$.

\end{proof}

\cref{thm:concat}(b) follows from elimination with \cref{thm:Brabrand}.

\subsubsection{Theorem 3: Ambiguity of Star}
\textit{Given unambiguous regex R,} 
    \begin{enumerate}[label=(\alph*)]
        \item \textit{$R*$ is infinitely ambiguous iff $ \epsilon \in L(R) \vee L(R) \ \Omega \ L(R*) \neq \phi$}.
        \item \textit{$R*$ cannot be finitely ambiguous.}
    \end{enumerate}

The components of~\cref{thm:star} follow from~\cref{lemma:star:A:IA}.

\begin{lemma}
\label{lemma:star:A:IA}
If R* is ambiguous, it is always infinitely ambiguous.
\end{lemma}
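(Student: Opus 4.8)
The plan is to prove the Lemma by a pumping-by-concatenation (``squaring'') argument, which I expect to carry essentially the whole proof. First I would unfold the definition of ambiguity for $R*$: since $R*$ is ambiguous, there exists a string $w \in L(R*)$ admitting $k \ge 2$ distinct accepting computations, equivalently $k$ distinct parse trees, each given by a decomposition $w = u_1 u_2 \cdots u_m$ with every $u_i \in L(R)$ together with a parse of each $u_i$ in $R$. I would fix such a $w$ together with its $k$ matchings $M_1, \dots, M_k$.

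Next I would manufacture many matchings of the pumped string $w^n = \underbrace{w \cdots w}_{n}$ for each $n \ge 1$. Given any index tuple $(i_1, \dots, i_n) \in \{1,\dots,k\}^n$, concatenating $M_{i_1}, \dots, M_{i_n}$ yields a valid decomposition of $w^n$ into $R$-pieces, hence a matching of $w^n$ by $R*$, with an iteration boundary placed at every multiple of $|w|$. The crux is to argue these are \emph{distinct}: the $j$-th block $M_{i_j}$ occupies exactly the positions $[(j-1)|w|+1,\ j|w|]$, so restricting a constructed matching to that position range recovers $M_{i_j}$; two tuples differing in coordinate $j$ therefore yield matchings that differ on block $j$. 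This shows $w^n$ has at least $k^n$ distinct matchings under $R*$.

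The conclusion is then immediate: since $k \ge 2$, the degree of ambiguity on $w^n$ is at least $2^n$, while the input length $|w^n| = n|w|$ grows only linearly in $n$. Hence the ambiguity of $R*$ is unbounded as a function of input length, which is precisely the condition for $R*$ to be infinitely ambiguous (it forces a PDA/EDA section in the equivalent ambiguity-preserving $\epsilon$-free NFA, by Weber \& Seidl~\cite{weber1991degree}). I would note that this single argument yields both \cref{thm:star}(b) and \cref{thm:star-lemma} for free, since any ambiguous $R*$ --- whether $R$ is unambiguous or merely finitely ambiguous --- supplies such a witness $w$.

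The main obstacle I anticipate is the degenerate case $w = \epsilon$, where the squaring step fails because $|w^n|$ does not grow. I would dispatch it separately: a second matching of $\epsilon$ can arise only when $\epsilon \in L(R)$, and in that case $R*$ already matches $\epsilon$ with any number of empty iterations $0, 1, 2, \dots$, producing infinitely many matchings of one string; the ambiguity-preserving construction records this as an IDA cycle, so $R*$ remains infinitely ambiguous. This edge case corresponds exactly to the $\epsilon \in L(R)$ disjunct in \cref{thm:star}(a). A secondary point I would state carefully is that the counted objects are genuinely distinct parse trees rather than mere position-splits; anchoring the count to the fixed position ranges of the blocks settles this cleanly.
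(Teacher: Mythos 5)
Your proposal is correct, and it shares the paper's core engine --- repeat an ambiguous witness and multiply ambiguity until it is unbounded in input length --- but it reaches that engine by a genuinely different and more self-contained route. The paper first invokes the contrapositive of \cref{thm:Brabrand}(c) to extract a structured overlap witness $s=xay$ with $x,xa \in L(R)$ and $y,ay \in L(R*)$, and then iterates by re-decomposing $ss$ as $x'a'y'$ with $x'=x$, $a'=a$, $y'=yxay$, checking that the overlap conditions persist (so the doubling argument is really an induction on the number of concatenated copies, yielding $2^n$ parses). You instead work directly from the definition of ambiguity: take any string $w$ with $k\ge 2$ parse trees, concatenate parses blockwise to get $k^n$ parses of $w^n$, and prove distinctness by restricting to the fixed position ranges of the blocks. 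What your route buys: (i) it avoids any dependence on \cref{thm:Brabrand}, (ii) the distinctness of the $k^n$ constructed parses is argued explicitly, whereas the paper's step from ``2 ways for each copy'' to ``$4$ ways for $ss$'' is asserted without a disjointness argument, (iii) it applies verbatim when $R$ is only finitely ambiguous, so \cref{thm:star-lemma} falls out as a corollary rather than needing a separate proof as in the paper, and (iv) it repairs a genuine gap in the paper's proof: the contrapositive of \cref{thm:Brabrand}(c) is the disjunction $\epsilon \in L(R) \vee L(R) \ \xblnot \ L(R*) \neq \phi$, but the paper's proof silently keeps only the overlap disjunct, whereas your separate treatment of the $w=\epsilon$ case covers the $\epsilon \in L(R)$ branch. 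What the paper's route buys in exchange is tighter integration with the operator machinery used to state \cref{thm:star}(a), since the overlap witness it manufactures is exactly the object named in that theorem's condition. One caveat on your edge case: saying the $\epsilon \in L(R)$ situation ``is recorded as an IDA cycle'' in an $\epsilon$-free NFA is loose, since an $\epsilon$-free automaton has finitely many paths per string; the cleaner statement is that infinitely many parse trees for a single string already witness unbounded ambiguity under the paper's parse-tree semantics, which is how \cref{thm:star}(a) treats that disjunct.
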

\begin{proof}


We prove this by induction.
From the contrapositive of~\cref{thm:Brabrand}(c),
if $R*$ is ambiguous, $L(R) \ \xblnot \ L(R*) \neq \phi$.
There exists an input string $s=xay$ such that 1) $x,y  \in \Sigma^\ast$, 2) $a \in \Sigma^+$, 3) $x, xa \in L(R)$, 4) $y,ay \in L(R\ast)$.
In other words, there are at least $2=2^1$ ways to parse $s$ (\ie $x \in L(R)$ then $ay \in L(R\ast)$; or $xa$ then $y$). 

Now consider $ss = (xay)(xay)$. Let $x'=x, a'=a, y'=yxay$ then,
$ss=x'a'y'$. Then the following conditions are true:
  (1) $x', y' \in \Sigma^\ast$,
  (2) $a' \in\Sigma^\ast$,
  (3) $x',x'a' \in L(R)$,
  and
  (4) $y', a'y' \in L(R*RR*) \subset L(R*)$.
  For each $xay$ there are at least 2 accepting paths.
  Therefore, for $ss$ there are at least $4$=$2^2$ accepting paths, and the degree of ambiguity grows for each additional concatenation of an $s$. 
Therefore, $R\ast$ is infinitely ambiguous. 

\end{proof}

\subsubsection{Theorem 4: Finite to Infinite}
\textit{Given a finitely ambiguous regex $R$, $R\ast$ is always infinitely ambiguous.}
\begin{proof}
If $R$ is finitely ambiguous by definition there exists an input string $s$ for which there will be at least 2 accepting paths. For $R\ast$, we can increase the length of input string as much as we want because of the $\ast$. Now for input string $ss$, there will be at least $4=2^2$ accepting paths as we have at least 2 options for each $s$. By the same logic, for input string $sss....$ where length of the input string is $n$, there will be at least $2^n$ accepting paths.

Therefore, $R\ast$ is infinitely ambiguous.
\end{proof}

\subsection{Limitations}
Our theorems do not cover the cases when $R1$ and $R2$ are finitely ambiguous.
In such scenario, our expectation is that Alternation ($R_1|R_2$) would always yield a finitely ambiguous regex.
We also expect that Concatenation ($R_1\cdot R_2$) would still yield an infinitely ambiguous regex if 
$L(R_1)$ contains the language of a regex \verb<BC*D< and 
$L(R_2)$ contains the language of a regex \verb<EF*G<, 
where $\epsilon \notin L($\verb<C<$) \wedge \epsilon \notin L($\verb<F<$) \wedge L($\verb<C<$) \cap L($\verb<F<$) \cap L($\verb<DE<$) \neq \phi$.
However, whether this is the only case is less clear.
Still, despite this limitation, our theorems allowed us to derive anti-patterns (\cref{section:AntiPatterns}) and fix strategies (\cref{sec:fix_strategies}) that substantially improved the effectiveness of the SOA ones (\cref{section:Experiment1}, \cref{sec:Experiment2}), and the usability of existing automatic detection and fixing tools (\cref{sec:Experiment3}).

\section{Other Figures}
\label{Appendix-OtherFigs}
\subsection{CVEs Increasing Year by Year}\label{CVEs}

We observe annual growth in \REDOS CVEs from 2010 to the present.
\cref{fig:CVES} shows the trend of \REDOS CVEs since 2010.
The incidence of \REDOS CVEs grew from 2 in 2010 to 20 in 2021.

\begin{figure}[h] 
	\centering
    \includegraphics[width=0.9\columnwidth]{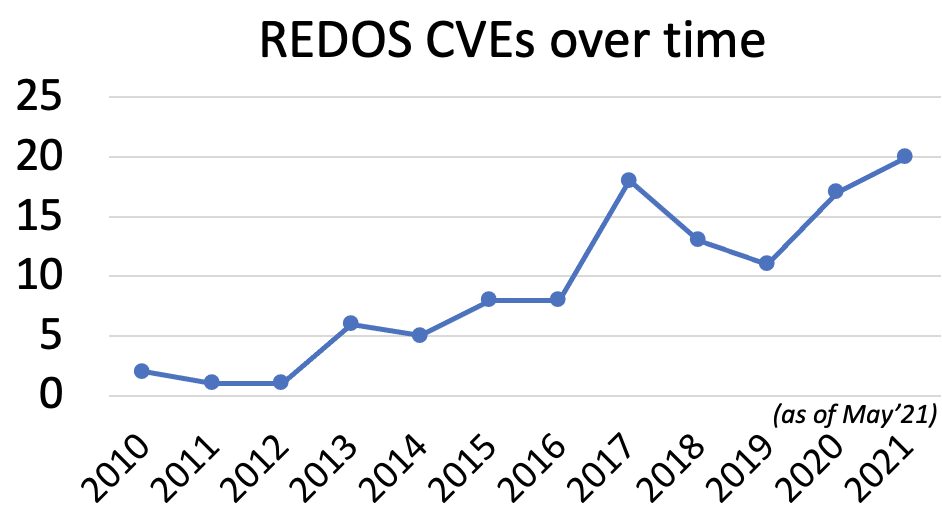}
    \captionof{figure}{
        The data were obtained by a two-step process: a preliminary labeling of the CVE database using key words and phrases (\eg ``\REDOS'' or ``extremely long time'' with a reference to regular expressions), followed by a manual inspection for accuracy.
    }
    \label{fig:CVES}
\end{figure}

\subsection{Fix Acceptance}\label{exp3_repair_comfort}

We asked the participants of Experiment 3 (\cref{sec:Experiment3}) how comfortable they were replacing the vulnerable regex in their codebase with the fixes provided by each repair treatment.
As shown in~\cref{fig:result_repair2_exp3}, almost all subjects were more comfortable with the fix produced using our anti-patterns and fix strategies. 

\begin{figure}[h] 
	\centering
    \includegraphics[width=1\columnwidth]{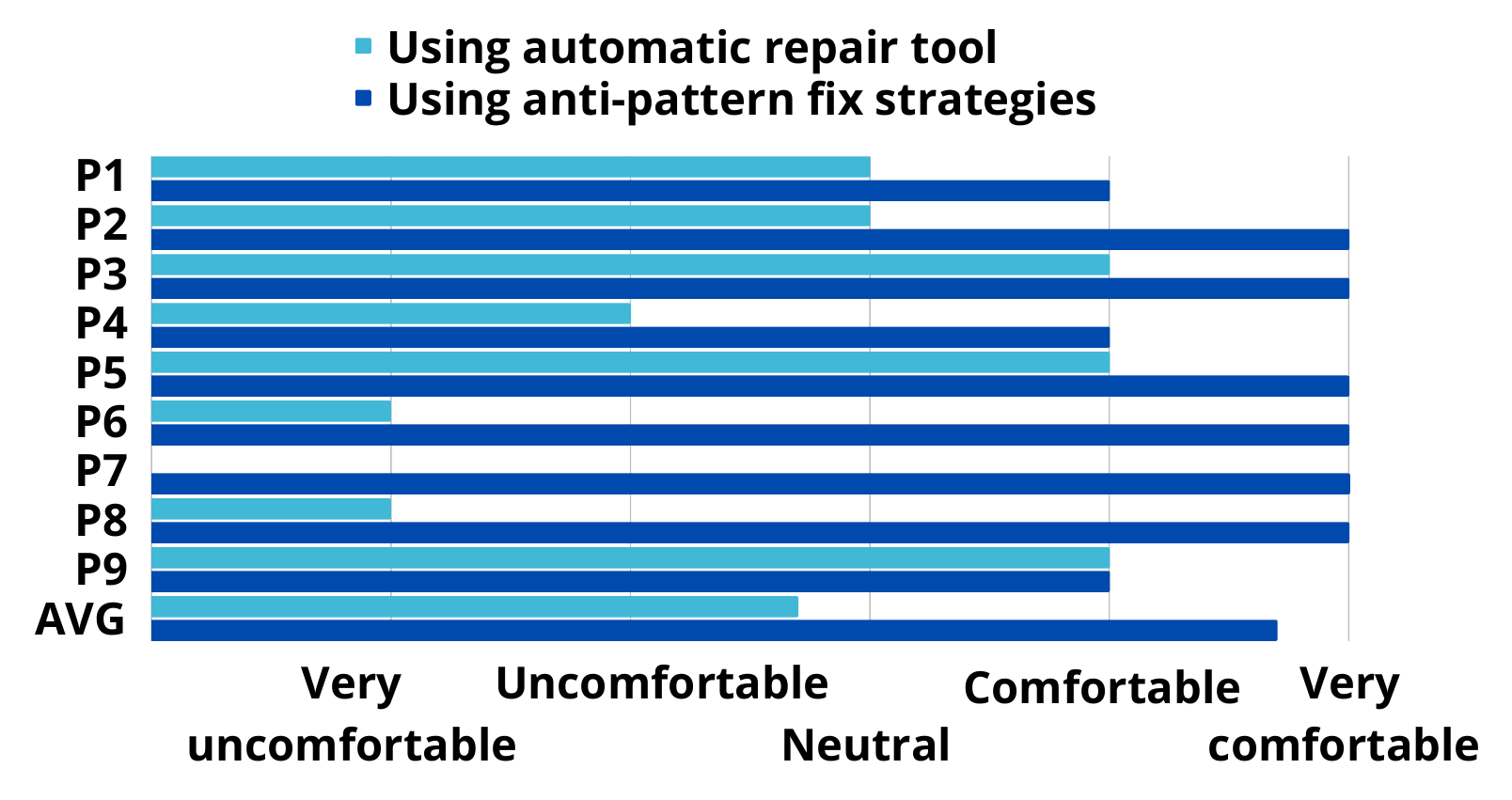}
    \captionof{figure}{
        Fixing Task: How comfortable our subjects reported being with fixing their codebase with the fix produced by each treatment.
    }
    \label{fig:result_repair2_exp3}
\end{figure}
\end{appendices}


\raggedbottom
\newpage
\balance

\bibliographystyle{IEEEtran}
\bibliography{bib/Mendeley, bib/WebLinks, bib/misc}

\end{document}